\documentclass[twocolumn,final]{IEEEtran}
\usepackage{amsmath,amsfonts}
\usepackage{amsthm}
\usepackage{algorithmic}
\usepackage{algorithm}
\usepackage{array}
\usepackage[caption=false,font=normalsize,labelfont=sf,textfont=sf]{subfig}
\usepackage{textcomp}
\usepackage{stfloats}
\usepackage{url}
\usepackage{verbatim}
\usepackage{graphicx}
\usepackage{cite}
\usepackage[mathscr]{eucal}
\usepackage{cases}
\usepackage{amssymb, amscd, amsthm}

\usepackage{makecell}
\newtheorem{lemma}{\textbf{Lemma}}
\newtheorem{theorem}{\textbf{Theorem}}
\newtheorem{definition}{\textbf{Definition}}
\newtheorem{remark}{\textbf{Remark}}
\newtheorem{corollary}{\textbf{Corollary}}

\newtheorem{Conjecture}{\textbf{Conjecture}}

\newcommand{\Tr}{{{\rm Tr}}}

\begin{document}

\title{Walsh Spectrum and Boomerang Properties of Locally-APN Niho Functions \thanks{The research of Y. Cui and J. Luo was supported by National Natural Science Foundation of China (Nos.12441102, 12171191) and the Fundamental Research Funds for the Central Universities (No.CCNU25JCPT031). The research of C. Xiang was supported by the Basic and Applied Basic Research Foundation of Guangdong Province of China under Grant number 2026A1515011223 and the National Natural Science Foundation of China under grant number 12171162.}
}

\author{ Yuehui Cui, Jinquan Luo and Can Xiang
\thanks{Y. Cui and J. Luo are with School of Mathematics
and Statistics \& Hubei Key Laboratory of Mathematical Sciences, Central China Normal University, Wuhan China 430079.(E-mail: hfcyh1@163.com; luojinquan@ccnu.edu.cn)}
\thanks{C. Xiang is with College of Mathematics and Informatics, South China Agricultural University, Guangzhou, Guangdong 510642, China.(E-mail: cxiangcxiang@hotmail.com)}}

\markboth{Journal of \LaTeX\ Class Files,~Vol.~, No.~, ~2026}%
{Shell \MakeLowercase{\textit{et al.}}: A Sample Article Using IEEEtran.cls for IEEE Journals}


\maketitle

\begin{abstract}

Recently, the Walsh spectrum and boomerang properties of special power functions have aroused widespread research interest, owing to their important applications in cryptography and information security. In particular, locally-APN functions may offer superior resistance against differential cryptanalysis compared to other functions of equivalent differential uniformity. Up till now only a small number of locally-APN functions have been studied. In this paper, we show that a Niho type power function is locally-APN if and only if its Walsh spectrum takes four values in \(\{-p^m, 0, p^m, 2p^m\}\). Equivalently, the associated cyclic codes have four nonzero weights: $p^{m-1}(p-1)(p^m + k)$ for $k = 0, 1, -1, -2$.  Moreover, we also study properties of Niho type locally-APN power functions, including their differential spectrum, Walsh spectrum, Feistel Boomerang Connectivity Table(FBCT for short) and second-order zero differential spectra.

\end{abstract}

\begin{IEEEkeywords}
Locally-APN function, Niho exponent, Walsh spectrum,  Cyclic codes, Boomerang properties.
\end{IEEEkeywords}

\section{Introduction}
\quad Let $\mathbb{F}_{p^n}$ be the finite field with $p^n$ elements, where $p$ is a prime and $n$ is a positive integer. Let $\mathbb{F}_{p^n}^*$ denote the multiplicative group of $\mathbb{F}_{p^n}$. For a function $F:\mathbb{F}_{p^n}\rightarrow\mathbb{F}_{p^n}$, the derivative function of $F$ at $a \in \mathbb{F}_{p^n}$ is  defined by
\[
D_aF(x)=F(x+a)-F(x), \mbox{ } \forall x \in \mathbb{F}_{p^n}.
\] For any $a,b \in \mathbb{F}_{p^n}$, let
\[
  \triangle_F(a,b)=\#\left\{x \in \mathbb{F}_{p^n} \mid D_aF(x)=b \right\}.
\]
The differential uniformity of $F$ is defined as
\[
  \triangle_F=\max\left\{\triangle_F(a,b) \mid a \in \mathbb{F}_{p^n}^*, b \in \mathbb{F}_{p^n}  \right\}.
\]

Let $F(x)=x^d$ be a power function from $\mathbb{F}_{p^{2m}}$ to $\mathbb{F}_{p^{2m}}$, where $d$ is a positive integer. A positive integer $d$ is called a Niho exponent over $\mathbb{F}_{p^{2m}}$ if
\begin{equation}\label{Niho d}
  d \equiv 1 \,\,({\rm mod}\,\,p^m-1).
\end{equation}
It is well known that the power functions $F(x)$ are often served as Substitution box (S-box) candidates due to their algebraic simplicity and efficiency in hardware implementations. Their inherent algebraic structure also simplifies the analysis of differential properties. For such functions, the equation
\begin{equation*}
(x + a)^d - x^d = b \Leftrightarrow a^d \left( \left( \frac{x}{a} + 1 \right)^d - \left( \frac{x}{a} \right)^d \right) = b
\end{equation*}
reveals the scaling relationship $\triangle_F(a, b) = \triangle_F\left(1, \frac{b}{a^d}\right)$ for all $a \in \mathbb{F}_{p^n}^*$ and $b \in \mathbb{F}_{p^n}$. Consequently, the differential behavior of $F(x)$ is entirely governed by the entries $\triangle_F(1, b)$ as $b$ varies over $\mathbb{F}_{p^n}$. This leads to the following formal definition.

\begin{definition}\rm \cite{firstdef}
  Let $F(x)=x^d$ be a power function from $\mathbb{F}_{p^n}$ to $\mathbb{F}_{p^n}$ with differential uniformity $\triangle_F$. Denote
\begin{equation*}
  \omega_i=\#\left\{b \in \mathbb{F}_{p^n} \mid \triangle_F(1,b)=i \right\},
\end{equation*}
where $0\leq i\leq\triangle_F$. The differential spectrum of $F(x)$ is defined as
\begin{equation*}
  \mathbb{DS}=\left\{\omega_i \mid 0\leq i\leq\triangle_F \mbox{ and } \omega_i>0 \right\}.
\end{equation*}
\end{definition}
\medskip

A function $F$ is called  perfect nonlinear(PN for short) and almost perfect nonlinear(APN for short) if $\triangle_F = 1$ and $\triangle_F = 2$, respectively. As an extension of the APN property, locally-APN power functions were introduced by Blondeau, Canteaut, and Charpin \cite{bl2t}, who noted their potential as strong candidates for resisting differential attacks.
A power function $F$ is called locally-APN if
$$\max\left\{ \triangle_F(1,b) \ \big| \ b \in \mathbb{F}_{p^n} \setminus \mathbb{F}_p \right\} = 2.$$

Locally-APN power functions have recently received a lot of attention, as these functions with some special properties have very important applications in cryptography and information security. For example, Blondeau and Nyberg \cite{exam4} used a cryptographic toy example to show that a locally-APN S-box can yield lower differential probabilities than differentially 4-uniform S-boxes. This suggests that locally-APN functions may offer superior resistance against differential cryptanalysis compared to other functions of equivalent differential uniformity. However, there are only a handful of locally-APN power functions in the literature(see Table \ref{table1}).  Thus, constructing more locally-APN power functions is also very meaningful.
For deeper insights into PN, APN and locally-APN functions, we refer interested readers  to \cite{APN1,bl2t,Blondeau.Perrin,Browning,APN2,Budaghyan,Berger,DCC.hu,L-apn.bin} and references therein.
\begin{table}[ht]
\caption{Locally-APN $F(x)=x^d$ over $\mathbb{F}_{p^n}$ with known differential spectrum}\label{table1}
\centering
\begin{tabular}{|c|c|c|c|}
\hline
    $d$       & Conditions  & $\triangle_F$ & Ref.   \\
    [0.5ex]
    \hline
     $2^n-2$ & $p=2$, $n$ even & 4 &   \cite{firstdef,edd.cc} \\  \hline
    $2^{m+1}-1$  & $p=2$, $n=2m$ & $2^m$ & \cite{bl2t,depth} \\   \hline
    $\frac{2^m-1}{2^k+1}+1$  &  $p=2$, $n=2m$, $(k,m)=1$  & $2^m$ & \cite{TIT.he} \\ \hline
       $\frac{3^n+5}{2}$  &  $p=3$, $n=2m$ & $\frac{3^n+3}{4}$ & \cite{du} \\   \hline
            $2p^{m}-1$  &  $p$ prime, $n=2m$ & $p^m$ & \cite{yan.niho} \\ \hline
    $k(p^m-1)$ &  \makecell{$p$ prime, $n=2m$, \\ $(k,p^m+1)=1$ } & $p^m-2$ & \cite{ DCC.hu}  \\ \hline
       \end{tabular}
\end{table}
It is well known that it is very important to choose cryptographic functions with good properties for resisting security attacks in cryptosystem. A further prevalent technique for attacking symmetric cryptography is linear attack. The nonlinearity of an S-box, a key indicator of its robustness against such linear attacks, is linked to its Walsh spectrum. In addition to differential and linear attacks, the boomerang attack is another important cryptanalysis technique introduced by Wagner\cite{wanger.boom} against block ciphers that involve S-boxes. It can be regarded as an extension of the differential attack. In 2018, Cid et al.\cite{BCT} presented a new tool known as the Boomerang Connectivity Table(BCT for short) to evaluate the resistance of an S-box to boomerang attacks. To address ciphers based on the Feistel Network structure, Boukerrou et al.\cite{fbtdef} expanded this approach to Feistel ciphers, where the S-boxes might not be bijective, and introduced the Feistel Boomerang Connectivity Table. In fact, the coefficients of the FBCT are closely connected to the second-order zero differential spectra of functions over finite fields of even characteristic. The readers can refer to \cite{yue.24,man2} for more information on second-order zero differential spectra. Thus, it is necessary to analyze and study the properties (for example, differential spectrum, walsh spectrum, boomerang spectrum and the second-order zero differential spectra) of some special functions.

Motivated by the above facts and analysis, we will consider Niho-type power functions and analyze their properties in this paper. The objectives of this paper are described in the following three paragraphs.

As special power functions, Niho type power functions have been extensively studied in areas such as cross correlation, bent functions, linear codes, and permutation polynomials. For further details, the reader can refer to \cite{appcode1,survey,Li.shu,Niho1972,xiong,xia1,xia2,TIT.he} as well as the comprehensive book \cite{bookcc} and the references therein.
It is well known that differential uniformity serves as a fundamental metric in cryptography, measuring the resistance of a S-box against differential cryptanalysis \cite{DES1}. However, up to now, only a limited number of classes of Niho type power functions with known differential uniformity. In this paper, we will completely determine the differential uniformity of Niho type power functions. Then, based on the favorable cryptographic properties of locally-APN functions, we will determine the differential spectrum of Niho type locally-APN functions.

A long-standing challenge in the study of the Walsh transform is the characterization of cryptographic functions whose Walsh spectrum assume only a limited number of distinct values, along with the precise determination of their value distributions. This area has attracted considerable research interest. Specially, Helleseth, Lahtonen, and Rosendahl \cite{Helleseth.four} demonstrated that the Walsh transform of Niho  type power functions takes at least four distinct values. In this paper, we will show that these four-valued Niho exponents exactly give all locally-APN Niho type power functions.  It is conjectured that a class of Niho type exponents may give all four-valued Walsh spectrum in \cite{via}. If this conjecture holds, then according to this paper, this class of Niho exponents would give all Niho type locally-APN power functions.

It is well known that a function \( F(x) \) is APN if and only if \( {\rm FBCT}_F(a,b) \) equals 0 for all pairs \( a,b\in\mathbb{F}_{2^n} \) satisfying \( ab(a+b) \neq 0 \) (see \cite{fbtdef,Carlet.C}). However, the behavior of FBCT for locally-APN functions is not well understood, and their second-order zero differential spectra remains largely unexplored.  In this paper, we thoroughly investigate and compute all components of the second-order zero differential spectra for Niho type locally-APN functions. Moreover, we give an exact counting formula on the number of pairs \( (a,b) \in \mathbb{F}_{p^n}^2 \) that occur in the spectra.

The rest of this paper is organized as follows. In Section \ref{Preliminaries}, we introduce some notations and auxiliary tools. In Section \ref{sec niho}, we determine the differential uniformity of Niho type power function $F$. When $F$ is locally-APN, we conduct an investigation of its differential spectrum, Walsh spectrum, associated cyclic codes and second-order zero differential spectra. The conclusive remarks are given in Section \ref{conclusion}.

\section{Preliminaries}\label{Preliminaries}
In this section, we briefly recall some definitions and results which will be used later in this paper. We begin this section
by fixing some notations throughout this paper unless otherwise stated.

\begin{tabular}{lll}
$\mathbb{F}_{p^n}$ & finite field of $p^n$ elements &\\
$\mathbb{F}_{p^n}^*$ & multiplicative group of $\mathbb{F}_{p^n}$ &\\
$\mathbb{F}_{p^n}^{\sharp}$ &  $\mathbb{F}_{p^n}\setminus \{0,-1\}$ &\\
$\psi$ & a primitive element of $\mathbb{F}_{p^n}$ &\\
$\Tr^n_1$ & trace function from $\mathbb{F}_{p^n}$ to $\mathbb{F}_p$ &\\
$\zeta_p$ &  a primitive complex $p$-th root of unity $e^{2\pi \sqrt{-1}/p}$ &\\
$\mu_e$ &  $\{x\in \mathbb{F}_{p^n}\mid x^e=1\}$ &\\
$\overline{x}$ &  $\overline{x}=x^{p^m}$ for $x\in\mathbb{F}_{p^{2m}}$ &\\
$\operatorname{ind}_{\psi}(x)$ & $\operatorname{ind}_{\psi}(x)=t$ for $0 \le t < p^n - 1$, $x\in\mathbb{F}_{p^n}^*$, $x = \psi^t$ &\\
$C_{i,j}$ & $\left\{x\in\mathbb{F}_{p^{2m}}^{\sharp}: \operatorname{ind}_{\psi}(x+1)\equiv i\pmod{p^m+1},\right. $ \\
                                 & $\left.~~~~~~\ \operatorname{ind}_{\psi}(x)\equiv j\pmod{p^m+1} \right\}$.
\end{tabular}
\medskip

Next we introduce two definitions which were described in \cite{fbtdef} and \cite{yue.24}, respectively.
\begin{definition} \rm\label{FBCT.def} \cite{fbtdef}
Let $F:\mathbb{F}_{2^n}\rightarrow\mathbb{F}_{2^n}$. The Feistel Boomerang Connectivity Table(FBCT for short) is an $\mathbb{F}_{2^n}\times\mathbb{F}_{2^n}$ table defined for $(a,b)\in\mathbb{F}_{2^n}^2$ by
\begin{equation*}
 \begin{aligned}
& {\rm FBCT}_F(a,b)=\\
&\makecell{\#\{x \in \mathbb{F}_{2^n}:  F(x+a+b)+F(x+a)+ \\~~~~~~~~~~~~~~~~~~~~~~ F(x+b)+F(x)=0 \}}.
   \end{aligned}
\end{equation*}
It is easy to see that ${\rm FBCT}_F(a,b)=2^n$ when $ab(a+b)=0$. Thus, the Feistel boomerang uniformity of $F(x)$ is defined by
$$
\beta_c(F)=\max _{(a, b) \in \mathbb{F}_{2^n}^2, ab(a+b)\ne 0} {\rm FBCT}_{F}(a, b).
$$
\end{definition}
\medskip

\begin{definition} \rm \cite{yue.24}
Let $F:\mathbb{F}_{p^n}\rightarrow\mathbb{F}_{p^n}$ and $a,b \in \mathbb{F}_{p^n}$, the second-order zero differential spectra of $F$ with respect to $a,b$ is defined as
\begin{equation*}
 \begin{aligned}
&\nabla_F(a,b)=\\
&\makecell{\# \{x \in \mathbb{F}_{p^n}:  F(x+a+b)-F(x+a)-\\~~~~~~~~~~~~~~~~~~~~~~ 
F(x+b)+F(x)=0 \} }.
   \end{aligned}
\end{equation*}
The second-order zero differential uniformity of $F$ is defined by $\nabla_F=\max\left\{\nabla_{F}(a,b) \mid a\neq b, a,b \in \mathbb{F}_{2^n}^*\right\}$ for $p=2$ and $\nabla_F=\max\left\{\nabla_{F}(a,b) \mid a,b \in \mathbb{F}_{p^n}^*\right\}$ for $p>2$.
\end{definition}
\medskip

We will need the results in the following three lemmas, which were documented in \cite{Li.shu}, \cite{luo111} and \cite{importentide}, respectively.

\begin{lemma}\rm\label{niho the}
\cite[Lemma 1]{Li.shu}. Let $d_1=g_1(p^m-1)+1$ and $d_2=g_2(p^m-1)+1$ be two Niho exponents respect to $\mathbb{F}_{p^{2m}}$. Then for $u,v\in\mathbb{F}_{p^{2m}}$, we have
\begin{equation*}
  \sum\limits_{x\in\mathbb{F}_{p^{2m}}}{\zeta_p}^{\Tr_1^{2m}(ux^{d_1}+vx^{d_2})}
  =(V(u,v)-1)\cdot p^m,
\end{equation*}
where $V(u,v)$ is the number of common solutions to
  \begin{equation}\label{Vuv}
     \left \{\begin{array}{ll}
     \bar vz^{2g_1-1}+\bar uz^{g_1+g_2-1}+uz^{g_1-g_2}+v=0,\\
     z^{p^m+1}=1.\\
\end{array}\right.
  \end{equation}
\end{lemma}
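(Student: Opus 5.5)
The plan is the standard polar-decomposition argument for Niho exponents. Every $x\in\mathbb{F}_{p^{2m}}^*$ is written uniquely as $x=yz$ with $y\in\mathbb{F}_{p^m}^*$ and $z\in\mu_{p^m+1}$. This is possible because $\gcd(p^m-1,p^m+1)\mid 2$, and the map $(y,z)\mapsto yz$ from $\mathbb{F}_{p^m}^*\times\mu_{p^m+1}$ onto $\mathbb{F}_{p^{2m}}^*$ is $\gcd(p^m-1,p^m+1)$-to-one. We also have $|\mu_{p^m+1}|=p^m+1$. Since $y^{p^m}=y$, a Niho exponent acts on $x=yz$ as
\[
x^{d_i}=(yz)^{g_i(p^m-1)+1}=y\,z^{g_i(p^m-1)+1}=y\,z^{1-2g_i},
\]
using $z^{p^m}=z^{-1}$ on $\mu_{p^m+1}$. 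Consequently the exponent $ux^{d_1}+vx^{d_2}$ is $\mathbb{F}_{p^m}$-linear in $y$ for each fixed $z$. This is the feature that makes the character sum computable.

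First I separate out the term $x=0$, which contributes $1$. Then I rewrite the sum over $\mathbb{F}_{p^{2m}}^*$ as a sum over $z\in\mu_{p^m+1}$ and $y\in\mathbb{F}_{p^m}^*$, keeping track of the covering multiplicity. In odd characteristic the fibres have size two, $(y,z)$ and $(-y,-z)$, so I insert a factor $\tfrac12$. Alternatively one can use the unique decomposition $x\in\mathbb{F}_{p^m}^*\cdot z$ with $z$ running over $\mu_{p^m+1}$ and $y$ over $\mathbb{F}_{p^m}^*$ up to sign. I expect the bookkeeping to cancel either way, and I would check it carefully.

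Next I use transitivity of the trace, $\Tr_1^{2m}=\Tr_1^m\circ\Tr_m^{2m}$, to obtain
\[
\Tr_1^{2m}\bigl(y(uz^{1-2g_1}+vz^{1-2g_2})\bigr)=\Tr_1^m\bigl(y\,A(z)\bigr),\qquad A(z)=uz^{1-2g_1}+vz^{1-2g_2}+\bar u z^{2g_1-1}+\bar v z^{2g_2-1}.
\]
Summing over $y\in\mathbb{F}_{p^m}$ by orthogonality gives $p^m$ if $A(z)=0$ and $0$ otherwise. Removing the $y=0$ term, the inner sum over $y\in\mathbb{F}_{p^m}^*$ equals $p^m[A(z)=0]-1$. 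Summing over $z$ and adding back the $x=0$ term yields
\[
1+\bigl(p^m N-(p^m+1)\bigr)=p^m(N-1),
\]
where $N=\#\{z\in\mu_{p^m+1}: A(z)=0\}$. This matches the claimed shape $(V(u,v)-1)p^m$.

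The remaining step, which I expect to be the main (though elementary) obstacle, is to identify $N$ with $V(u,v)$. Multiplying $A(z)=0$ by a suitable power of $z$ does not change the root set inside $\mu_{p^m+1}$, since $z\ne0$. The subtlety is that system (\ref{Vuv}) involves exponents $2g_1-1$, $g_1+g_2-1$ and $g_1-g_2$, with only four terms, so the natural variable is likely $z^2$ or a substitution such as $w=z^{g_1-g_2}$ relative to a shifted variable. I would first try the substitution $z\mapsto z^{-2}$ (or $z^2$) and multiply through by $z^{\,2g_2-1}$-type factors to reach exactly $\bar v z^{2g_1-1}+\bar u z^{g_1+g_2-1}+uz^{g_1-g_2}+v$. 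At the same time I must check that squaring on $\mu_{p^m+1}$, which is $2$-to-$1$ for odd $p$, is compensated exactly by the factor $\tfrac12$ from the covering above, so that the counts agree. For $p=2$ the squaring map is bijective and there is no factor to compensate. Reconciling these exponent normalizations with the cited form of \cite[Lemma 1]{Li.shu} is where care is needed.
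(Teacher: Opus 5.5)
There is a genuine gap for odd $p$. For $p=2$ your computation is correct, but for odd $p$ the decomposition it rests on is false, and this is not a bookkeeping issue. Since $|\mathbb{F}_{p^m}^*\times\mu_{p^m+1}|=p^{2m}-1$, a $2$-to-$1$ map $(y,z)\mapsto yz$ cannot be onto $\mathbb{F}_{p^{2m}}^*$. Its image is the index-$2$ subgroup of squares, because for odd $p$ both $p^m-1$ and $p^m+1$ divide $(p^{2m}-1)/2$. Inserting $\tfrac12$ therefore computes only the sum over nonzero squares. The squaring map on $\mu_{p^m+1}$ does account for the double count inside that half, but the non-squares never appear. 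Your final tally $1+\bigl(p^mN-(p^m+1)\bigr)$ also silently drops the $\tfrac12$. Keeping it gives $\tfrac12\bigl(p^m(N-1)+1\bigr)$, which is not even divisible by $p^m$.

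The fix is to parametrize cosets rather than elements. The map $x\mapsto x^{p^m-1}$ is a surjective homomorphism $\mathbb{F}_{p^{2m}}^*\to\mu_{p^m+1}$ with kernel $\mathbb{F}_{p^m}^*$. On the fibre $x_z\mathbb{F}_{p^m}^*$ over $z$ one has $x^{d_i}=xz^{g_i}$ and $\bar x=xz$. Hence $\Tr_m^{2m}\bigl(x(uz^{g_1}+vz^{g_2})\bigr)=x\,B(z)$, where $B(z)=uz^{g_1}+vz^{g_2}+\bar u z^{1-g_1}+\bar v z^{1-g_2}$. Your orthogonality step then gives $p^m[B(z)=0]-1$ for each $z$. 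The total is $p^m(N-1)$ with $N=\#\{z\in\mu_{p^m+1}:B(z)=0\}$, uniformly in $p$ and with no factor $\tfrac12$.

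The final identification also cannot be carried out as you plan, because the printed system has $u$ and $v$ interchanged. Multiplying $B(z)=0$ by $z^{g_1-1}$ and then replacing $z$ by $z^{-1}$ (a bijection of $\mu_{p^m+1}$) gives
\[
\bar u z^{2g_1-1}+\bar v z^{g_1+g_2-1}+v z^{g_1-g_2}+u=0,
\]
which is the printed polynomial with $u$ and $v$ swapped. The printed version is in fact false. With $g_2=0$ and $v=0$ it reads $z^{g_1-1}(\bar u+uz)=0$, which always has exactly one root $z=-\bar u/u\in\mu_{p^m+1}$. It would therefore force $\sum_x\zeta_p^{\Tr_1^{2m}(ux^{d_1})}=0$ for every $u\neq0$. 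That fails whenever $x^{d_1}$ is not a permutation. For example, take $p=2$, $m=1$, $g_1=2$ and $u=\omega$ with $\omega^2+\omega+1=0$: the sum is $1+3(-1)^{\Tr_1^2(\omega)}=-2$, while the printed system predicts $0$.

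So no substitution $z\mapsto z^{\pm2}$ or rescaling by powers of $z$ will reach the printed form. What you can and should prove is the interchanged form above. For $g_2=0$ it is also the form that produces the counts (i)--(iv) in the proof of Theorem~\ref{nihowalsh}. The paper itself gives no proof of this lemma; it only cites \cite[Lemma 1]{Li.shu}.
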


\begin{lemma}\rm\label{luo wang}
\cite[Theorem 1]{luo111}
Let $p$ be a prime number, $n=2kl$ with $k$ and $l$ positive integers. Let  $lcm (n_1,n_2)\mid p^l+1$ and $t=\gcd(n_1,n_2)$. Let $0\leq r_1\leq n_1-1$ and $0\leq r_2\leq n_2-1$. Let $N_{p^n}(\chi)$ be the number of $\mathbb{F}_{p^n}$-rational points on the affine curve
\begin{equation*}
  \chi: \alpha x^{n_1}+\beta y^{n_2}+1=0.
\end{equation*}
 For $\alpha\in \left\{\psi^{r_1+n_1w}\mid 0\leq w\leq \frac{p^n-1}{n_1}-1 \right\}$ and
 $\beta\in\left\{\psi^{r_2+n_2w}\mid 0\leq w\leq \frac{p^n-1}{n_2}-1 \right\}$, we have the following results.
 \begin{itemize}
	\item[(i)] if $r_1=r_2=0$, then $N_{p^n}(\chi)=p^n+(-1)^{k-1}((n_1-1)(n_2-1)+1-t)p^{\frac{n}{2}}-t+1$.
	\item[(ii)] if $r_1=0$, $r_2\neq0$ and $t \nmid r_2$, then $N_{p^n}(\chi)=p^n+(-1)^k(n_1-2)p^{\frac{n}{2}}+1$.
	\item[(iii)] if $r_1\neq0$, $r_2=0$ and $t\nmid r_1$, then $N_{p^n}(\chi)=p^n+(-1)^k(n_2-2)p^{\frac{n}{2}}+1$.
	\item[(iv)] if $r_1\neq0$, $r_2\neq0$ and $t\nmid r_1-r_2$, then $N_{p^n}(\chi)=p^n+(-1)^{k-1}2p^{\frac{n}{2}}+1$.
	\item[(v)] if $r_1\neq0$, $r_2\neq0$ and $t\mid r_1-r_2$, then $N_{p^n}(\chi)=p^n+(-1)^{k}(t-2)p^{\frac{n}{2}}-t+1$.
\end{itemize}

\end{lemma}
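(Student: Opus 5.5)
The approach is to count the points of $\chi$ with multiplicative characters, which reduces everything to Jacobi sums. Those Jacobi sums are then evaluated in closed form because of the semiprimitive hypothesis $\mathrm{lcm}(n_1,n_2)\mid p^l+1$. Write $q=p^n$. For $u\in\mathbb{F}_q$,
$$\#\{x\in\mathbb{F}_q: x^{n_1}=u\}=\sum_{\lambda^{n_1}=1}\lambda(u),$$
where $\lambda$ runs over the multiplicative characters of order dividing $n_1$, with the convention $\lambda_0(0)=1$ and $\lambda(0)=0$ for $\lambda\neq\lambda_0$. The analogous formula holds for $n_2$. Splitting $N_{p^n}(\chi)$ according to $u=\alpha x^{n_1}$ and $v=\beta y^{n_2}$ with $u+v=-1$ gives
\begin{equation*}
N_{p^n}(\chi)=\sum_{\lambda_1^{n_1}=1}\sum_{\lambda_2^{n_2}=1}\overline{\lambda_1}(\alpha)\,\overline{\lambda_2}(\beta)\sum_{u+v=-1}\lambda_1(u)\lambda_2(v).
\end{equation*}
The inner sum takes one of four values:
\begin{itemize}
\item $q$ if both characters are trivial;
\item $0$ if exactly one is trivial;
\item $-\lambda_1(-1)$ if both are nontrivial but $\lambda_1\lambda_2$ is trivial;
\item $\lambda_1\lambda_2(-1)\,J(\lambda_1,\lambda_2)$ otherwise.
\end{itemize}
All characters involved have order dividing $p^l+1$, which divides $(q-1)/2$ when $p$ is odd. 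Hence $\lambda(-1)=1$ throughout, and the sign factors disappear.

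The key input is the semiprimitive evaluation of Gauss sums: since $-1$ is a power of $p$ modulo the order of each character and $n=2kl$, every nontrivial $\lambda$ of order dividing $p^l+1$ satisfies $g(\lambda)=\pm p^{n/2}$ with an explicit sign. For $p=2$ the sign is $(-1)^{k-1}$. For odd $p$ it is $(-1)^{k-1+k(p^l+1)/\mathrm{ord}(\lambda)}$. Using $J(\lambda_1,\lambda_2)=g(\lambda_1)g(\lambda_2)/g(\lambda_1\lambda_2)$ when $\lambda_1\lambda_2\ne\lambda_0$, I expect the extra order-dependent signs to cancel, giving $J(\lambda_1,\lambda_2)=(-1)^{k-1}p^{n/2}$ uniformly. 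Verifying this cancellation, that is, tracking the parity of $k(p^l+1)/\mathrm{ord}$ for $\lambda_1$, $\lambda_2$ and $\lambda_1\lambda_2$, is the step I expect to be the main obstacle. I would first try writing every character as a power of a fixed character of order $p^l+1$, and expect the parity to be additive in the exponent.

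With these values, fix a generator $\omega$ of the character group and set $\lambda_1=\omega^{a(q-1)/n_1}$ and $\lambda_2=\omega^{b(q-1)/n_2}$. Then $\overline{\lambda_1}(\alpha)=\zeta_{n_1}^{-ar_1}$ and $\overline{\lambda_2}(\beta)=\zeta_{n_2}^{-br_2}$, and
\begin{equation*}
N_{p^n}(\chi)=q+(-1)^{k-1}p^{n/2}\!\!\sum_{\substack{a\neq0,\,b\neq0\\ a/n_1+b/n_2\notin\mathbb{Z}}}\!\!\zeta_{n_1}^{-ar_1}\zeta_{n_2}^{-br_2}\;-\!\!\sum_{\substack{a\neq0,\,b\neq0\\ a/n_1+b/n_2\in\mathbb{Z}}}\!\!\zeta_{n_1}^{-ar_1}\zeta_{n_2}^{-br_2}.
\end{equation*}
The pairs with $a/n_1+b/n_2\in\mathbb{Z}$ and $a,b\neq0$ are exactly the $t-1$ pairs where $a/n_1=s/t$ and $b/n_2=-s/t$ for $s=1,\dots,t-1$. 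On these pairs the summand becomes $\zeta_t^{-s(r_1-r_2)}$ after reducing exponents. Therefore the last sum equals $t-1$ if $t\mid r_1-r_2$ and $-1$ otherwise. The first sum is the full double sum over $a\neq0,b\neq0$ minus these $t-1$ terms. The full double sum factors as $\bigl(\sum_{a\neq0}\zeta_{n_1}^{-ar_1}\bigr)\bigl(\sum_{b\neq0}\zeta_{n_2}^{-br_2}\bigr)$, and each factor equals $n_i-1$ or $-1$ according as $r_i=0$ or not.

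It remains to run through the five cases. In (i) the first sum is $(n_1-1)(n_2-1)-(t-1)$ and the last is $t-1$. In (ii) the first sum is $-(n_1-1)+1$ since $t\nmid r_2$, and the last is $-1$. Case (iii) is symmetric to (ii). In (iv) the first sum is $1+1$ and the last is $-1$. In (v) the first sum is $1-(t-1)$ and the last is $t-1$. Each of these yields the stated formula, so the only nonroutine ingredient is the uniform sign of the semiprimitive Jacobi sums.
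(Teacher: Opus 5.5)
Your proof is correct. Note that the paper gives no proof of this lemma: it imports the result verbatim from \cite{luo111}. So there is no in-paper argument to compare against. Your reduction of the point count to Jacobi sums, evaluated through semiprimitive Gauss sums, is the standard self-contained route, and your bookkeeping for cases (i)--(v) checks out.

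The step you flagged as the main obstacle goes through as you expected. Write $\lambda=\theta^c$ with $\theta$ of order $N=p^l+1$. Then $N/\mathrm{ord}(\lambda)=\gcd(c,N)$. For odd $p$ the integer $N$ is even, so $\gcd(c,N)\equiv c \pmod 2$. The extra sign is therefore $(-1)^{kc}$, which is multiplicative in $\lambda$. It cancels in $g(\lambda_1)g(\lambda_2)/g(\lambda_1\lambda_2)$, which gives $J(\lambda_1,\lambda_2)=(-1)^{k-1}p^{n/2}$ uniformly.

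Two points in the Gauss-sum input should be made explicit, because the semiprimitive theorem is not usually stated in the form you quote.

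First, it is normally stated with the minimal $j$ such that $p^j\equiv -1 \pmod{e}$, where $e=\mathrm{ord}(\lambda)$, and with sign exponent $\gamma-1+(p^j+1)\gamma/e$, where $\gamma=n/(2j)$. To pass to your version with $l$, observe that $l/j$ is odd, since $\mathrm{ord}_e(p)=2j$ and $p^l\equiv-1$. Hence $\gamma=k\,(l/j)\equiv k \pmod 2$. Also $(p^l+1)/(p^j+1)$ is a sum of $l/j$ odd terms, so it is odd, and therefore $(p^l+1)/e\equiv (p^j+1)/e \pmod 2$.

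Second, the theorem is stated for $e>2$. For odd $p$, characters of order $2$ do occur, for instance when $n_1$ is even. Check these directly against the explicit quadratic Gauss sum $g(\eta)=-\sqrt{q}$ or $-(-1)^{kl}\sqrt{q}$, according as $p\equiv 1$ or $3 \pmod 4$. This agrees with your formula $(-1)^{k-1+k(p^l+1)/2}\sqrt{q}$.
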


It is well known that the two identities
\begin{equation}\label{two basic ide}
  \sum\limits_{i=0}^{\Delta_F}\omega_i=p^n \mbox{ and } \sum\limits_{i=0}^{\Delta_F}i\omega_i=p^n
\end{equation}
are useful for computing the differential spectrum of power function \(F\)(see \cite{firstdef}). However, just knowing these two identities may not be enough to calculate the differential spectrum. We also need the following results in Lemma \ref{fouride}.

\begin{lemma}\rm\label{fouride}
\cite[Theorem 10]{importentide}
With the above notations, let $M$ denote the number of solutions in $\left(\mathbb{F}_{p^n}\right)^4$ of
\begin{eqnarray*}
    \left \{
\begin{array}{lll}x_1-x_2+x_3-x_4=0,\\
x_1^{d}-x_2^{d}+x_3^d-x_4^{d}=0.\\
\end{array}\right.
\end{eqnarray*}
Then we have
\begin{equation}\label{very importent ide}
  \sum\limits_{i=0}^{\Delta_F}i^2\omega_i=\frac{M-p^{2n}}{p^n-1}.
\end{equation}
\end{lemma}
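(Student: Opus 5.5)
The plan is to count the solutions of the system directly, by parametrizing them through a common difference, and to recognize the count for each fixed nonzero difference as a sum of squares of differential counts.

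First I will rewrite the linear equation $x_1-x_2+x_3-x_4=0$ as $x_1-x_2=x_4-x_3$. Call this common value $a\in\mathbb{F}_{p^n}$ and set $x_2=x$, $x_3=y$, so that $x_1=x+a$ and $x_4=y+a$. This gives a bijection between solutions of the linear equation and triples $(a,x,y)\in\mathbb{F}_{p^n}^3$. Under it, the second equation $x_1^d-x_2^d+x_3^d-x_4^d=0$ becomes
\[
(x+a)^d-x^d=(y+a)^d-y^d, \quad\text{that is,}\quad D_aF(x)=D_aF(y).
\]
Hence
\[
M=\sum_{a\in\mathbb{F}_{p^n}}\#\{(x,y): D_aF(x)=D_aF(y)\}.
\]

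Next I will split off the term $a=0$. There $D_0F\equiv 0$, so every pair $(x,y)$ is a solution, and this term contributes $p^{2n}$. For $a\neq 0$, I will group the pairs according to the common value $b=D_aF(x)=D_aF(y)$. This gives
\[
\#\{(x,y): D_aF(x)=D_aF(y)\}=\sum_{b\in\mathbb{F}_{p^n}}\triangle_F(a,b)^2.
\]

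Then I will use the scaling relation recalled in the introduction, $\triangle_F(a,b)=\triangle_F(1,b/a^d)$. Since $a\neq 0$, the map $b\mapsto b/a^d$ is a bijection of $\mathbb{F}_{p^n}$. Therefore
\[
\sum_b\triangle_F(a,b)^2=\sum_b\triangle_F(1,b)^2=\sum_{i=0}^{\Delta_F}i^2\omega_i,
\]
where the last equality holds because, by the definition of the differential spectrum, exactly $\omega_i$ values of $b$ have $\triangle_F(1,b)=i$. This quantity does not depend on $a$.

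Summing over the $p^n-1$ nonzero values of $a$ gives
\[
M=p^{2n}+(p^n-1)\sum_{i=0}^{\Delta_F} i^2\omega_i,
\]
and rearranging yields \eqref{very importent ide}. No step here is a serious obstacle. The only points needing care are that the parametrization $(x_1,x_2,x_3,x_4)\leftrightarrow(a,x,y)$ is a genuine bijection, and that the scaling reduction requires $a\neq 0$, which is exactly why the $a=0$ term is treated separately.
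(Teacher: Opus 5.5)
Your proof is correct and complete. The map $(a,x,y)\mapsto(x+a,x,y,y+a)$ is a bijection onto the solutions of the linear equation. The slice $a=0$ contributes $p^{2n}$. For $a\neq 0$, the scaling $\triangle_F(a,b)=\triangle_F(1,b/a^d)$ makes $\sum_b\triangle_F(a,b)^2=\sum_i i^2\omega_i$ independent of $a$; since every $\triangle_F(1,b)$ is at most $\Delta_F$, nothing is lost in the sum over $i$.

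The paper does not prove this lemma itself. It quotes Theorem 10 of the cited source and remarks only that the proof is related to the Walsh transform. That route computes the fourth moment $\sum_{u,v}|W_F(u,v)|^4$ in two ways. By orthogonality of characters, exactly as in Lemma~\ref{powersum}, it equals $p^{2n}M$. Writing $|W_F(u,v)|^2=\sum_{a}\zeta_p^{-\Tr_1^n(va)}\sum_{x}\zeta_p^{\Tr_1^n(uD_aF(x))}$ and applying Parseval first in $v$ and then in $u$, it also equals $p^{2n}\sum_{a,b}\triangle_F(a,b)^2$. The scaling step is then the same as yours.

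Your counting argument reaches the core identity $M=\sum_{a,b}\triangle_F(a,b)^2$ directly, with no character sums and uniformly in $p$, so it is shorter and more elementary. The character route is heavier, but it exhibits $M$ as the fourth Walsh moment. That is precisely the form the paper exploits: Corollary~\ref{four sum 1} supplies the fourth-moment equation in the proof of Theorem~\ref{nihowalsh}, via Lemma~\ref{powersum} with $r=4$.
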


Note that the proof of Lemma \ref{fouride} is related to the Walsh transform which defined by the following definition.
\begin{definition}\rm
  Consider a function \( F : \mathbb{F}_{p^n} \rightarrow \mathbb{F}_{p^n} \). Its Walsh transform is given by
\begin{equation}\label{suv}
W_F(u,v)=\sum\limits_{x\in\mathbb{F}_{p^n}}\zeta_p^{{\rm{Tr}}_1^n(uF(x)-vx)},
\end{equation}
where $(u,v)\in \mathbb{F}_{p^n}\times\mathbb{F}_{p^n}$, $\zeta_p=e^{\frac{2\pi\sqrt{-1}}{p}}$ is a primitive complex $p$-th root of unity, and $\Tr^n_1(x)=\sum\limits_{i=0}^{n-1}x^{p^i}$ denotes the absolute trace function from $\mathbb{F}_{p^n}$ to $\mathbb{F}_{p}$.
The Walsh spectrum of the function \( F \) is subsequently defined as the multi-set
\[\left\{ W_F(u, v) :  \mbox{ } u \in \mathbb{F}_{p^n}^*, \mbox{ } v \in \mathbb{F}_{p^n} \right\}.\]
\end{definition}

Based on the above definition, we have the following results about Walsh spectrums.

\begin{lemma}\rm\label{powersum}
Let $N_r$ denote the number of solutions of

\begin{equation}\label{system of eq2}\left \{
\begin{array}{cll}x_1+x_2+\cdots+x_r=0,\\
x_1^{d}+x_2^{d}+\cdots+x_r^{d}=0,\\
\end{array}\right.
\end{equation}
in $\left(\mathbb{F}_{p^n}\right)^r$.  Then we have
\begin{equation}\label{eq in lem2}
\sum\limits_{u,v\in \mathbb{F}_{p^n} }W_{x^d}(u,v)^r=p^{2n}N_r.
\end{equation}
\end{lemma}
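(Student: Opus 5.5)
Expand the $r$-th power of the Walsh transform as an $r$-fold sum, interchange the order of summation, and apply orthogonality of additive characters twice.

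For fixed $u,v\in\mathbb{F}_{p^n}$, by the definition (\ref{suv}),
\begin{equation*}
W_{x^d}(u,v)^r=\sum_{x_1,\dots,x_r\in\mathbb{F}_{p^n}}\zeta_p^{\Tr_1^n\left(u(x_1^d+\cdots+x_r^d)-v(x_1+\cdots+x_r)\right)},
\end{equation*}
using that $\Tr_1^n$ is additive, so the product of the $r$ exponentials is the exponential of the sum. Summing over all $(u,v)\in\mathbb{F}_{p^n}^2$ and interchanging the finite sums gives
\begin{equation*}
\sum_{u,v}W_{x^d}(u,v)^r=\sum_{x_1,\dots,x_r}\Big(\sum_{u}\zeta_p^{\Tr_1^n(uS_d)}\Big)\Big(\sum_{v}\zeta_p^{-\Tr_1^n(vS_1)}\Big),
\end{equation*}
where $S_1=x_1+\cdots+x_r$ and $S_d=x_1^d+\cdots+x_r^d$.

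Next apply the standard orthogonality relation $\sum_{w\in\mathbb{F}_{p^n}}\zeta_p^{\Tr_1^n(wc)}=p^n$ if $c=0$ and $0$ otherwise. This holds because $w\mapsto \zeta_p^{\Tr_1^n(wc)}$ is a nontrivial additive character when $c\neq0$, since the trace is surjective. The same relation applies with the sign flipped, since $-\Tr_1^n(vS_1)=\Tr_1^n((-v)S_1)$ and $v\mapsto -v$ is a bijection. Consequently each inner product equals $p^{2n}$ exactly when $S_1=0$ and $S_d=0$, and equals $0$ otherwise. Summing over all $r$-tuples therefore counts the solutions of the system (\ref{system of eq2}), which gives $p^{2n}N_r$, i.e. (\ref{eq in lem2}).

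There is no real obstacle. The only point requiring care is the sign convention $-vx$ in the Walsh transform, which the substitution $v\mapsto -v$ handles. Note also that the sum runs over all $u$, including $u=0$, as the statement requires.
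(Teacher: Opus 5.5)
Your proof is correct and follows essentially the same route as the paper: expand the $r$-th power, interchange the sums, and use character orthogonality in $v$ and then in $u$ to pick out the tuples with $S_1=S_d=0$. You also treat the $-vx$ sign explicitly through $v\mapsto -v$, which the paper's displayed computation glosses over (harmlessly, since $v$ ranges over all of $\mathbb{F}_{p^n}$).
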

\begin{proof}
By definition, it is obvious that
\begin{equation*}
\begin{aligned}
    & \sum\limits_{u,v\in \mathbb{F}_{p^n} } W_{x^d}(u,v)^r \\
    &=\sum\limits_{u,v\in \mathbb{F}_{p^n}}\sum\limits_{x_1,x_2,\cdots,x_r\in \mathbb{F}_{p^n}}\zeta_p^{\Tr ^n_1\left(u(x_1^{d}+x_2^{d}+\cdots+x_r^{d})-v(x_1+x_2+\cdots+x_r)\right)}\\
    &=\sum\limits_{u,x_1,x_2,\cdots,x_r\in \mathbb{F}_{p^n}}\zeta_p^{\Tr ^n_1\left(u(x_1^{d}+x_2^{d}+\cdots+x_r^{d})\right)}\sum\limits_{v\in \mathbb{F}_{p^n}}\zeta_p^{\Tr ^n_1\left(v(x_1+x_2+\cdots+x_r)\right)}\\
    &=p^n\cdotp \sum\limits_{\substack {u,x_1,x_2,\cdots,x_r\in \mathbb{F}_{p^n}, \\x_1+x_2+\cdots+x_r=0}}\zeta_p^{\Tr ^n_1\left(u(x_1^{d}+x_2^{d}+\cdots+x_r^{d})\right)}\\
    &=p^{2n}\cdotp N_r.
    \end{aligned}
\end{equation*}
\end{proof}

\section{Differential, Walsh, Boomerang Spectra, and Cyclic Codes}\label{sec niho}

\subsection{Differential properties of Niho type power functions}
In this subsection, we first derive a closed-form expression for the differential uniformity of the power function \( F(x) = x^{s(p^m-1)+1} \). Furthermore, we completely characterize its differential spectrum when this power function is locally-APN. To this end, we need the results in Lemma \ref{crslemma}.

\begin{lemma}\label{crslemma}\rm
For any $i,j \in \{0, 1, \dots, p^m\}$, we have
\[
|C_{i,j}| =
\begin{cases}
p^m-2, & \text{if } i = j = 0, \\
1, & \text{if } i \neq j \text{ and } i j \neq 0, \\
0, & \text{if } i = j \neq 0 \text{ or } (i \neq j \text{ and } i j = 0).
\end{cases}
\]
\end{lemma}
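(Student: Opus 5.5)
The plan is to translate the index conditions into membership in cosets of the subfield $\mathbb{F}_{p^m}^*$, and then count by elementary linear algebra over $\mathbb{F}_{p^m}$. Write $q=p^m$. Since $\psi^{q+1}$ generates $\mathbb{F}_q^*=\mu_{q-1}$, for $x\in\mathbb{F}_{q^2}^*$ we have $\operatorname{ind}_{\psi}(x)\equiv j\pmod{q+1}$ if and only if $x\in\psi^j\mathbb{F}_q^*$. Moreover, the $q+1$ cosets $\psi^j\mathbb{F}_q^*$ with $0\le j\le q$ are pairwise distinct, and $\psi^j\in\mathbb{F}_q$ if and only if $j\equiv 0\pmod{q+1}$. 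Hence
\[
C_{i,j}=\{x\in\mathbb{F}_{q^2}^{\sharp}: x\in\psi^j\mathbb{F}_q^*,\ x+1\in\psi^i\mathbb{F}_q^*\}.
\]
The proof then splits into the cases below.

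\textbf{Case $ij=0$.} Suppose $j=0$. Then $x\in\mathbb{F}_q^*$, so $x+1\in\mathbb{F}_q$. It is nonzero because $x\neq-1$, and therefore $i=0$. Symmetrically, $i=0$ forces $x=(x+1)-1\in\mathbb{F}_q$, and $x\ne0$, so $j=0$. Thus $C_{i,j}=\emptyset$ when exactly one of $i,j$ is $0$. For $i=j=0$ the set is precisely $\mathbb{F}_q\setminus\{0,-1\}$, which has size $q-2$.

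\textbf{Case $ij\neq 0$.} We seek $a,b\in\mathbb{F}_q^*$ with $x=\psi^j a$ and $x+1=\psi^i b$, that is,
\[
b\psi^i-a\psi^j=1 .
\]
The correspondence $x\leftrightarrow(a,b)$ is bijective, since $a$ and $b$ are determined by $x$.

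If $i=j$, the equation reads $(b-a)\psi^i=1$. This would force $\psi^i\in\mathbb{F}_q$, which contradicts $1\le i\le q$. So $C_{i,i}=\emptyset$.

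If $i\neq j$, then $\psi^{i-j}\notin\mathbb{F}_q$, because $0<|i-j|\le q$. Hence $\psi^i$ and $\psi^j$ are linearly independent over $\mathbb{F}_q$, and so they form a basis of the $2$-dimensional $\mathbb{F}_q$-space $\mathbb{F}_{q^2}$. Consequently there is exactly one pair $(a,b)\in\mathbb{F}_q^2$ satisfying the equation. It remains to check that $a,b\neq0$. If $a=0$ then $\psi^i=b^{-1}\in\mathbb{F}_q$, which is impossible; likewise $b=0$ would give $\psi^j\in\mathbb{F}_q$. The resulting $x=\psi^j a$ is automatically nonzero, and $x\neq-1$ because $x+1=\psi^i b\neq0$. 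Hence $|C_{i,j}|=1$.

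There is no real obstacle in this argument. The only step requiring care is the initial identification of index classes modulo $q+1$ with cosets of $\mathbb{F}_q^*$, together with the fact that the representatives $0,\dots,q$ give distinct classes. Everything afterwards is a basis argument in $\mathbb{F}_{q^2}$ over $\mathbb{F}_q$.
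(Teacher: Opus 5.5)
Your proof is correct and complete, but for the main case it takes a different route from the paper. The paper writes the two coset conditions as $y_1=x^{p^m+1}$ and $y_2=y^{p^m+1}$. It then invokes the Luo--Wang point count (Lemma~\ref{luo wang}(iv)) for the diagonal curve $\psi^i x^{p^m+1}-\psi^j y^{p^m+1}=1$, which has $(p^m+1)^2$ points, and divides by the $(p^m+1)^2$-element fibres of $(x,y)\mapsto(x^{p^m+1},y^{p^m+1})$. The empty cases are only asserted to follow from ``a similar analysis''.

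You replace the Gauss/Jacobi-sum machinery behind that point count with a linear-algebra observation. When $i\not\equiv j\pmod{p^m+1}$ and neither index is $0$, the elements $\psi^i,\psi^j$ form an $\mathbb{F}_{p^m}$-basis of $\mathbb{F}_{p^{2m}}$, so the coordinates of $1$ give the unique solution, and you check directly that both coordinates are nonzero.

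What this buys:
\begin{itemize}
\item The proof is self-contained and elementary.
\item It actually proves the vanishing cases ($ij=0$ with $i\ne j$, and $i=j\ne0$), which the paper skips.
\item It sidesteps details the paper leaves implicit. One is checking that no point of the curve has $xy=0$ before dividing by $(p^m+1)^2$. Another is that $C_{0,0}=\mathbb{F}_{p^m}\setminus\{0,-1\}$, which the paper misprints as $\mathbb{F}_{p^{2m}}\setminus\{0,-1\}$.
\end{itemize}

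The paper's approach is heavier, but it sits inside a general framework: Lemma~\ref{luo wang} handles other exponents $n_1,n_2$ dividing $p^l+1$. Your argument relies specifically on the modulus being $p^m+1$, so that the index classes are exactly the cosets of the subfield group $\mathbb{F}_{p^m}^*$. That is precisely the situation here.
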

\begin{proof}
(i) When $i = j = 0$, it is clear that $|C_{0,0}| = |\mathbb{F}_{p^{2m}}\setminus \{0,-1\}| = p^m-2$.

(ii) When $i \neq j$ and $ij \neq 0$, in order to compute $|C_{i,j}|$, we need to count the number of solutions to the system
\begin{equation}\label{001b}
    \left \{
\begin{array}{ll}
\psi^iy_1-\psi^jy_2=1,\\
\operatorname{ind}_{\psi}(y_1)=\operatorname{ind}_{\psi}(y_2)\equiv 0\,\,({\rm mod}\,\,p^m+1).
\end{array}\right.
\end{equation}
The condition $\operatorname{ind}_{\psi}(y_1)=\operatorname{ind}_{\psi}(y_2)\equiv 0\pmod{p^m+1}$ implies that we may set
$y_1 = \psi^{u(p^m+1)}$ and $y_2 = \psi^{v(p^m+1)}$, where $u, v \in \{0, 1, \dots, p^m-2\}$.
Substituting into $\psi^iy_1 - \psi^jy_2 = 1$, we obtain
\[
\psi^{i+u(p^m+1)} - \psi^{j+v(p^m+1)} = 1.
\]
Let $x = \psi^{u}$ and $y = \psi^{v}$. Then $x, y \in \mathbb{F}_{p^{2m}}^*$, and the equation becomes
\begin{equation}\label{bu1}
\psi^ix^{p^m+1} - \psi^jy^{p^m+1} = 1.
\end{equation}
Applying Lemma \ref{luo wang} (iv) with $n_1 = n_2 = p^m+1$, $k = 1$, $l = m$, $t = \gcd(p^m+1, p^m+1) = p^m+1$, $r_1 = i$, and $r_2 = j$, the number of $\mathbb{F}_{p^{2m}}$-rational points on the curve $\psi^ix^{p^m+1} - \psi^jy^{p^m+1} = 1$ is
\[
N = p^{2m} + (-1)^{1-1}(2p^m) + 1 = p^{2m} + 2p^m + 1 = (p^m+1)^2.
\]
Observe that if $(x_0, y_0)$ is a solution to \eqref{bu1}, then for any $k_1, k_2 \in \{0, 1, \dots, p^m\}$,
\[
(x_0 \psi^{k_1(p^m-1)}, y_0 \psi^{k_2(p^m-1)})
\]
is also a solution. Therefore, the number of solutions to the original system \eqref{001b} is
\[
\frac{N}{(p^m+1)^2} = \frac{(p^m+1)^2}{(p^m+1)^2} = 1.
\]
Hence, $|C_{i,j}| = 1$ for $i \neq j$ and $ij \neq 0$.

(iii) For the remaining cases, a similar analysis yields $|C_{i,j}| = 0$.
\end{proof}

\begin{theorem}\rm\label{niho theo}
Let $F(x)=x^{s(p^m-1)+1}$ be a Niho type power function over $\mathbb{F}_{p^n}$, where $n=2m$, $s_1=\gcd(s, p^m+1)$ and $s_2=\gcd(s-1, p^m+1)$. Then, the differential uniformity \( \triangle_F \) is given by
\[ p^m + (s_1 - 1)(s_1 - 2) + (s_2 - 1)(s_2 - 2).\]
Furthermore, if \(F\) is locally-APN, then its differential uniformity is $p^m$ and its differential spectrum is given by
\[ \mathbb{DS} = \left\{ \omega_0 = \frac{p^{2m} + p^m - 2}{2},\quad \omega_2 = \frac{p^{2m} - p^m}{2},\quad \omega_{p^m} = 1 \right\}.
\]
\end{theorem}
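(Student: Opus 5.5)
The plan is to reduce the differential equation to a linear one on each cyclotomic class $C_{i,j}$ and then count. Write $d=s(p^m-1)+1$ and let $\lambda=\psi^{p^m-1}$, which generates $\mu_{p^m+1}$. For $x\in C_{i,j}$ we have $x^{p^m-1}=\lambda^{j}$ and $(x+1)^{p^m-1}=\lambda^{i}$, so $x^d=\lambda^{js}x$ and $(x+1)^d=\lambda^{is}(x+1)$. The equation $D_1F(x)=b$ on $C_{i,j}$ therefore becomes
\[
(\lambda^{is}-\lambda^{js})\,x=b-\lambda^{is}.
\]
Writing instead $x^d=\bar x\,x^{(s-1)(p^m-1)}$ gives the conjugate-linear form
\[
(\lambda^{i(s-1)}-\lambda^{j(s-1)})\,\bar x=b-\lambda^{i(s-1)}.
\]

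\textbf{Split into the subfield and the rest.}
\begin{itemize}
\item On $\mathbb{F}_{p^m}$ (including $x=0,-1$) we have $x^d=x$, so every one of these $p^m$ elements solves $D_1F(x)=1$. This gives the main term $p^m$ at $b=1\in\mathbb{F}_p$.
\item Outside $\mathbb{F}_{p^m}$ we have $ij\ne0$. By Lemma \ref{crslemma}, $C_{i,i}=\emptyset$ for $i\neq0$ and $|C_{i,j}|=1$ for $i\neq j$, $ij\ne0$. So every such class contributes exactly one element $x_{i,j}$ with $D_1F(x_{i,j})=\lambda^{is}(x_{i,j}+1)-\lambda^{js}x_{i,j}$.
\end{itemize}

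\textbf{Degenerate classes.} The class $C_{i,j}$ is degenerate for the first form when $\lambda^{is}=\lambda^{js}$, i.e. $\frac{p^m+1}{s_1}\mid i-j$. There the value $b=\lambda^{is}$ is forced.
\begin{itemize}
\item For $b=1$, this requires $i,j$ to be nonzero multiples of $\frac{p^m+1}{s_1}$ with $i\ne j$, giving $(s_1-1)(s_1-2)$ elements.
\item The second form gives, in the same way, $(s_2-1)(s_2-2)$ elements with $b=1$.
\end{itemize}
Adding these to the $p^m$ subfield solutions gives
\[
\triangle_F(1,1)=p^m+(s_1-1)(s_1-2)+(s_2-1)(s_2-2),
\]
after I check that the two degenerate families cannot overlap. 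An overlap would need $\lambda^{i}=\lambda^{j}$, hence $i=j$, which is excluded.

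\textbf{Main obstacle: showing $b=1$ attains the maximum.} Each $b\neq1$ is hit only by the single elements $x_{i,j}$. I would bound these counts as follows.
\begin{itemize}
\item A non-degenerate class contributes to a given $b$ only if its unique element equals the linear solution $(b-\lambda^{is})/(\lambda^{is}-\lambda^{js})$. This happens at most once per $i$, giving at most about $p^m$ hits, and the subfield never contributes to $b\ne1$.
\item For degenerate values $b=\eta\in\mu_{p^m+1}\setminus\{1\}$ there are at most $s_1(s_1-1)$ hits, plus possibly one family from the second form.
\end{itemize}
I would compare these with the $b=1$ count using $s_1,s_2\mid p^m+1$ and $\gcd(s_1,s_2)=1$. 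I expect this comparison to be the delicate step, since the crude bound must be sharpened to a strict inequality; the first thing I would try is to show that for $b\ne1$ each $i$ really yields at most one admissible $j$.

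\textbf{Locally-APN case.} Suppose $F$ is locally-APN.
\begin{itemize}
\item If $s_1\ge3$, pick $\eta=\lambda^{is}\ne1$ with $\eta\notin\mathbb{F}_p$. The degenerate count $s_1(s_1-1)\ge6>2$ gives a contradiction. For $s_1=2$ one has $(s_1-1)(s_1-2)=0$ anyway. The same argument applies to $s_2$. Hence $\triangle_F=p^m$.
\item Since $d$ is odd (or $p=2$), the involution $x\mapsto -1-x$ preserves $D_1F(x)$. Its only fixed point is $-1/2\in\mathbb{F}_{p^m}$. Every $b\ne1$ is hit only by points outside $\mathbb{F}_{p^m}$, which therefore pair up, so $\triangle_F(1,b)$ is even.
\item Values $b\in\mathbb{F}_p\setminus\{1\}$ are also at most $2$. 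If $b\in\mathbb{F}_p\setminus\{1\}\subseteq\mathbb{F}_{p^m}$ had a solution $x\notin\mathbb{F}_{p^m}$, I would argue that the conjugate $\bar x$, hence a second orbit, must also solve the equation, and push this through the linear forms to a contradiction with the class sizes. This step needs care.
\item Hence $\omega_1=0$, $\omega_{p^m}=1$, and all other nonzero values equal $2$.
\end{itemize}
The identities \eqref{two basic ide} then give
\[
\omega_0+\omega_2+1=p^{2m},\qquad 2\omega_2+p^m=p^{2m},
\]
so $\omega_2=\frac{p^{2m}-p^m}{2}$ and $\omega_0=\frac{p^{2m}+p^m-2}{2}$.
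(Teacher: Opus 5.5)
Your setup (linearising $D_1F$ on the classes $C_{i,j}$) is the paper's, but the argument for the first formula stops at a lower bound. The subfield solutions plus your two degenerate families give $\triangle_F(1,1)\ge p^m+(s_1-1)(s_1-2)+(s_2-1)(s_2-2)$. Equality also needs that no class non-degenerate for both of your forms produces $b=1$. The bound $\triangle_F(1,b)\le\triangle_F(1,1)$ for $b\ne1$ is, as you say, left open.

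The missing idea, which the paper's proof uses, is to impose the class condition on the linear solution, i.e.\ to use both forms at once. Put $u=\lambda^i$, $v=\lambda^j$ with $u^s\ne v^s$. Substituting $x=(b-u^s)/(u^s-v^s)$ into $x^{p^m-1}=v$ and $(x+1)^{p^m-1}=u$ gives $(1-u^s\bar b)v^s=v(b-u^s)$ and $(1-v^s\bar b)u^s=u(b-v^s)$, where $b\ne u^s,v^s$ because $x\ne0,-1$.
\begin{itemize}
\item If $b\in\mu_{p^m+1}$, then $\bar b=b^{-1}$ and these collapse to $u^{s-1}=v^{s-1}=b$. So for every such $b$, including $b=1$, only your two degenerate families contribute. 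This gives the exact count at $b=1$, and $\triangle_F(1,b)\le s_1(s_1-1)+s_2(s_2-1)$ for $b\ne1$. Both terms occur only if $b$ is an $s_1s_2$-th power, which forces $s_1s_2\le(p^m+1)/2$. In every case the $b=1$ count is strictly larger; when both terms occur the difference is $p^m+4-2(s_1+s_2)>0$.
\item If $b\notin\mu_{p^m+1}$, take solutions in $C_{i,j}$ and $C_{i,k}$. The first relation gives $\lambda^{j(s-1)}=\lambda^{k(s-1)}$, since $1-u^s\bar b\ne0$. The second gives $(\lambda^{js}-\lambda^{ks})(1-b^{p^m+1})=0$. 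Hence $j=k$ and $\triangle_F(1,b)\le p^m$.
\end{itemize}
This split is essential. For $b\in\mu_{p^m+1}$ a single $i$ can carry up to $\max(s_1,s_2)-1$ values of $j$, so the \emph{one $j$ per $i$} claim you planned to try is false there.

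The locally-APN part contains a step that fails, and the failure is not only in your write-up. Picking an $s_1$-th power $\eta\ne1$ with $\eta\notin\mathbb{F}_p$ requires $(p^m+1)/s_1\ge3$, because these powers form $\mu_{(p^m+1)/s_1}$ and $\mu_{p^m+1}\cap\mathbb{F}_p\subseteq\{\pm1\}$.

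For $s_1=(p^m+1)/2$ this cannot be repaired. Take $p$ odd and $s=(p^m+1)/2$, so $F(x)=x^{(p^{2m}+1)/2}=x\,\chi(x)$ with $\chi$ the quadratic character. Since $\chi(-1)=\chi(2)=1$, a direct case check gives $\triangle_F(1,b)\in\{0,2\}$ for all $b\ne\pm1$, with $2$ attained at some $b\notin\mathbb{F}_p$. So $F$ is locally-APN, yet $\triangle_F(1,1)=(p^{2m}+3)/4$. For instance $p^m=5$, $s=3$ gives $\triangle_F=7\ne5$, in agreement with the first formula ($s_1=3$, $s_2=2$).

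Thus the \emph{Furthermore} assertion needs an extra hypothesis. The paper's proof has the same hole where it infers $s_1(s_1-1)+s_2(s_2-1)\le2$ from the locally-APN property. In addition, the values $b\in\mathbb{F}_p\setminus\{1\}$ are not controlled by that property; for example $\triangle_F(1,0)=\gcd(2s-1,p^m+1)-1$. These values are bounded neither in your sketch, where you flag the issue, nor in the paper.
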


\begin{proof}
Let
\[\triangle(x)=(x+1)^{s(p^m-1)+1}-x^{s(p^m-1)+1}.\]
We get $\triangle(0)=1$ and $\triangle(-1)=1$. Hence, in the following we only need to consider the number of solutions of \(\Delta(x)=b\) in \(\mathbb{F}_{p^n}^{\sharp}\).
Let $\alpha=\psi^{p^m-1}$. Obviously,
$
 \bigsqcup\limits_{0\leq i,j\leq p^m} C_{i,j}=\mathbb{F}_{p^n}^{\sharp}
$
(here $\bigsqcup$ indicates disjoint union).  If $x\in\mathbb{F}_{p^n}^{\sharp}$, we consider $(p^m+1)^2$ cases:
\begin{table}[H]
\centering
\begin{tabular}{|l|c|c|c|c|c|}
\hline
$x$ in set & $\triangle(x)=b$ & $x+1$& $x$\\
[0.5ex]
\hline
 $C_{0,0}$ & $1=b$ & \makecell{To Be \\Determined} & \makecell{To Be \\Determined} \\
\hline
 $C_{0,1}$ & $x+1-\alpha^s x=b$ & $\frac{b-\alpha^s}{1-\alpha^s}$ & $\frac{b-1}{1-\alpha^s}$\\
\hline
 \vdots  & \vdots  & \vdots  & \vdots  \\
\hline
 $C_{0,p^m}$ & $x+1-\alpha^{p^ms} x=b$
 & $\frac{b-\alpha^{p^ms}}{1-\alpha^{p^ms}}$
& $\frac{b-1}{1-\alpha^{p^ms}}$\\
\hline
 $C_{1,0}$ & $\alpha^s(x+1)-x=b$
 & $\frac{b-1}{\alpha^s-1}$
& $\frac{b-\alpha^s}{\alpha^s-1}$\\
\hline
 $C_{1,1}$ & $\alpha^s=b$
 & \makecell{To Be \\Determined} & \makecell{To Be \\Determined} \\
\hline
 \vdots  & \vdots  & \vdots  & \vdots  \\
\hline
 $C_{1,p^m}$ & $\alpha^s(x+1)-\alpha^{p^ms}x=b$
 & $\frac{b-\alpha^{p^ms}}{\alpha^s-\alpha^{p^ms}}$
& $\frac{b-\alpha^s}{\alpha^s-\alpha^{p^ms}}$\\
\hline
 \vdots  & \vdots  & \vdots  & \vdots  \\
\hline
 $C_{p^m,0}$ & $\alpha^{p^ms}(x+1)-x=b$
 & $\frac{b-1}{\alpha^{p^ms}-1}$
& $\frac{b-\alpha^{p^ms}}{\alpha^{p^ms}-1}$\\
\hline
 \vdots  & \vdots  & \vdots  & \vdots  \\
\hline
 $C_{p^m,p^m}$ & $\alpha^{p^ms}=b$
 &  \makecell{To Be \\Determined} & \makecell{To Be \\Determined} \\
\hline
\end{tabular}
\end{table}
{\textbf{Case 1:}}
$x\in C_{0,0}$. By $|C_{0,0}|=p^m-2$ and the table above, it is easy to see that $\triangle(x) = 1$ has exactly $p^m - 2$ solutions in $C_{0,0}$, and $\triangle(x) = b$ has no solution in $C_{0,0}$ for $b \neq 1$.

{\textbf{Case 2:}}
$x\in \left(\bigsqcup\limits_{0< i\leq p^m} C_{i,i}\right) \bigsqcup \left(\bigsqcup\limits_{0 \le i,j \le p^m, i \neq j, ij=0} C_{i,j}\right)$.
By Lemma \ref{crslemma}, both unions have cardinality $0$.
Therefore, for any $b\in \mathbb{F}_{p^n}$, $\Delta(x) = b$ has no solution in this set.

{\textbf{Case 3:}}
$x\in \bigsqcup\limits_{1\leq i,j\leq p^m, i\neq j} C_{i,j}$.
In this case, if $\triangle(x)=b$ has one solution $x_1\in C_{j_1,j_2}$ and one solution $x_2\in C_{j_1,j_3}$ with $j_2\neq j_3$ and $j_1\neq j_2,j_3$, then
\begin{equation*}
  x_1=\frac{b-\alpha^{j_1s}}{\alpha^{j_1s}-\alpha^{j_2s}}, x_1+1=\frac{b-\alpha^{j_2s}}{\alpha^{j_1s}-\alpha^{j_2s}},
\end{equation*}
\begin{equation*}
  x_2=\frac{b-\alpha^{j_1s}}{\alpha^{j_1s}-\alpha^{j_3s}}, x_2+1=\frac{b-\alpha^{j_3s}}{\alpha^{j_1s}-\alpha^{j_3s}}.
\end{equation*}
Notice that $\alpha^i\in \mu_{p^m+1}$ for $0\leq i \leq p^m$ which implies $\alpha^{ip^m}=\alpha^{-i}$. Then we can get
\begin{equation}\label{x1pl11}
  x_1^{p^m-1}=\left(\frac{b-\alpha^{j_1s}}{\alpha^{j_1s}-\alpha^{j_2s}}\right)^{p^m-1}
  =\frac{(1-\alpha^{j_1s}b^{p^m})\alpha^{j_2s}}{b-\alpha^{j_1s}},
\end{equation}
\begin{equation}\label{x1+1pl11}
  (x_1+1)^{p^m-1}=\left(\frac{b-\alpha^{j_2s}}{\alpha^{j_1s}-\alpha^{j_2s}}\right)^{p^m-1}
  =\frac{(1-\alpha^{j_2s}b^{p^m})\alpha^{j_1s}}{b-\alpha^{j_2s}},
\end{equation}
\begin{equation}\label{x2pl11}
  x_2^{p^m-1}=\left(\frac{b-\alpha^{j_1s}}{\alpha^{j_1s}-\alpha^{j_3s}}\right)^{p^m-1}
  =\frac{(1-\alpha^{j_1s}b^{p^m})\alpha^{j_3s}}{b-\alpha^{j_1s}},
\end{equation}
and
\begin{equation}\label{x2+1pl11}
  (x_2+1)^{p^m-1}=\left(\frac{b-\alpha^{j_3s}}{\alpha^{j_1s}-\alpha^{j_3s}}\right)^{p^m-1}
  =\frac{(1-\alpha^{j_3s}b^{p^m})\alpha^{j_1s}}{b-\alpha^{j_3s}}.
\end{equation}
Meanwhile, from \(x_1\in C_{j_1,j_2}\) and \(x_2\in C_{j_1,j_3}\) we obtain
\[
x_1=\psi^{k_1(p^m+1)+j_2},\quad x_1+1=\psi^{k_2(p^m+1)+j_1},
\]
\[
x_2=\psi^{k_3(p^m+1)+j_3},\quad x_2+1=\psi^{k_4(p^m+1)+j_1},
\]
with \(0\le k_i\le p^m-2\) for \(1\le i \le 4\).
Therefore,
\begin{equation}\label{x1alpha}
x_1^{p^m-1} = \alpha^{j_2}, \quad (x_1+1)^{p^m-1} = \alpha^{j_1},
\end{equation}
\begin{equation}\label{x2alpha}
  x_2^{p^m-1} = \alpha^{j_3}, \quad (x_2+1)^{p^m-1} = \alpha^{j_1}.
\end{equation}
Combining (\ref{x1pl11})-(\ref{x2alpha}), we deduce
\begin{equation}\label{x1pl}
 \frac{(1-\alpha^{j_1s}b^{p^m})\alpha^{j_2s}}{b-\alpha^{j_1s}}=\alpha^{j_2},
\end{equation}
\begin{equation}\label{x1+1pl}
\frac{(1-\alpha^{j_2s}b^{p^m})\alpha^{j_1s}}{b-\alpha^{j_2s}}=\alpha^{j_1},
\end{equation}
\begin{equation}\label{x2pl}
\frac{(1-\alpha^{j_1s}b^{p^l})\alpha^{j_3s}}{b-\alpha^{j_1s}}=\alpha^{j_3},
\end{equation}
and
\begin{equation}\label{x2+1pl}
\frac{(1-\alpha^{j_3s}b^{p^m})\alpha^{j_1s}}{b-\alpha^{j_3s}}=\alpha^{j_1}.
\end{equation}

\noindent\textbf{Subcase 1: }
$b=1$. By (\ref{x1pl}) and (\ref{x1+1pl}), we have
\[ \alpha^{j_1(1-s)} = \alpha^{j_2(1-s)} = 1. \]
In the cyclic group of order \( p^m+1 \), the equation \( x^{1-s} = 1 \) has exactly \( s_2 = \gcd(s-1, p^m+1) \) solutions. Hence, there exist \( s_2-1 \) distinct elements \( j_1,\dots,j_{s_2-1} \) in \( (0, p^m] \) such that \( \alpha^{j_k(1-s)} = 1 \) for \( 1 \le k \le s_2-1 \). Consequently, for any \( k \neq l \) with \( 1 \le k,l \le s_2-1 \), \( \Delta(x) = 1 \) has exactly one solution in \( C_{j_k,j_l} \).
Similarly, the equation \( x^s = 1 \) has \( s_1 = \gcd(s, p^m+1) \) solutions in the cyclic group of order \( p^m+1 \), giving \( s_1-1 \) distinct values \( t_1,\dots,t_{s_1-1} \) in \( (0, p^m] \) with \( \alpha^{t_i s} = 1 \). Therefore, \( \Delta(x) = 1 \) also has exactly one solution in each \( C_{t_i,t_j} \) for \( 1 \le i,j \le s_1-1 \) with \( i \neq j \). Therefore, the total number of solutions to $\triangle(x)=1$ in
\[\bigsqcup\limits_{1\leq i,j\leq p^m, i\neq j} C_{i,j}\]
 is $(s_1 - 1)(s_1 - 2) + (s_2 - 1)(s_2 - 2)$.

\noindent\textbf{Subcase 2: }
$b\in\mu_{p^m+1}\setminus\{1\}$.   By (\ref{x1pl}) and (\ref{x1+1pl}),
\[ \alpha^{j_1(s-1)} = \alpha^{j_2(s-1)} = b. \]
Since \( s_2 = \gcd(s - 1, p^m + 1) \), for \( b \in \langle \alpha^{s_2} \rangle \), the equation \( x^{1-s} = b \) yields \( s_2 \) distinct solutions \( \alpha^{j_1}, \dots, \alpha^{j_{s_2}} \). Hence, for any distinct \( k, l \) with \( 1 \le k, l \le s_2 \), \( \Delta(x) = b \) has exactly one solution in \( C_{j_k, j_l} \). Similarly,  if \( b \in \langle \alpha^{s_1} \rangle \), there exist \( s_1 \) distinct elements \( t_1,\dots,t_{s_1} \) in \( (0, p^m] \) satisfying \( \alpha^{t_i s} = b \). Consequently, for any distinct \( i, j \) with \( 1 \le i,j \le s_1 \), \( \Delta(x) = b \) has exactly one solution in \( C_{t_i,t_j} \).  Hence, for \(b \in \langle \alpha^{s_1s_2} \rangle\) (where \(s_1s_2 < p^m+1\)), \(\triangle(x)=b\) has \(s_1(s_1 - 1) + s_2(s_2 - 1)\) solutions in
\[\bigsqcup\limits_{1\leq i,j\leq p^m, i\neq j} C_{i,j}\].

\noindent\textbf{Subcase 3: }
$b\not\in\mu_{p^m+1}$.
By (\ref{x1pl}) and (\ref{x2pl}), we have
\[\alpha^{j_2(1-s)}=\alpha^{j_3(1-s)}.\]
By (\ref{x1+1pl}) and (\ref{x2+1pl}), we can get
\[(\alpha^{j_2s}-\alpha^{j_3s})(1-b^{p^l+1})=0.\]
But this contradict  ${j_2\neq j_3}$. Therefore, for any fixed \( j \) with \( 1 \leq j \leq p^m \), the equation \(\Delta(x) = b\) has at most one solution in \[\bigsqcup\limits_{1 \leq i \leq p^m} C_{j,i}.\]
Therefore, for \( b \notin \mu_{p^m+1} \), the equation \(\Delta(x) = b\) has at most \( p^m \) solutions in
\[
\bigsqcup\limits_{\substack{1 \leq i,j \leq p^m \\ i \neq j}} C_{i,j}.
\]

Based on the above discussion, \(\Delta(x) = 1\) has
 \[p^m + (s_1 - 1)(s_1 - 2) + (s_2 - 1)(s_2 - 2)\]
 solutions.
When \( b \in \mu_{p^m+1} \setminus \{1\} \), the equation \( \Delta(x) = b \) has at most
\[
s_1(s_1 - 1) + s_2(s_2 - 1)
\]
solutions. As shown previously, we have \( s_1 s_2 < p^m + 1 \) in this case. By divisibility, \( s_1 s_2 \) is a proper divisor of \( p^m + 1 \), hence
\[
s_1 s_2 \le \frac{p^m + 1}{2},
\]
 which yields
\[
s_1 + s_2 \le \frac{p^m + 1}{2} + 1 < \frac{p^m + 4}{2}.
\]
Expanding and rearranging this inequality gives
\[
s_1(s_1 - 1) + s_2(s_2 - 1) < p^m + (s_1 - 1)(s_1 - 2) + (s_2 - 1)(s_2 - 2).
\]
For \(b \notin \mu_{p^m+1}\), \(\Delta(x) = b\) has at most \(p^m\) solutions. Hence,
\[
\triangle_F = p^m + (s_1 - 1)(s_1 - 2) + (s_2 - 1)(s_2 - 2).
\]
It is also noted that when \(F\) is locally-APN, i.e.,
\[
\max\bigl\{ \triangle_F(1,b) \mid b \in \mathbb{F}_{p^n} \setminus \mathbb{F}_p \bigr\} = 2,
\]
we must have \(s_1(s_1-1)+s_2(s_2-1) \le 2\), and hence
\[
(s_1-1)(s_1-2)+(s_2-1)(s_2-2)=0.
\]
Moreover, if \(x\) is a solution of \(\Delta(x)=b\), then \(-1-x\) is also a solution. This observation gives \(\omega_1 = 0\). Consequently, when \(F\) is locally-APN, \(\triangle_F = p^m\) and its differential spectrum can be derived from Equation (\ref{two basic ide}). This completes the proof.
\end{proof}

The following result is useful for determining the Walsh spectrum of Niho type power functions.
\begin{corollary}\rm\label{third sum 1}
  Let $F(x)=x^d=x^{s(p^m-1)+1}$ be a Niho type power function defined over $\mathbb{F}_{p^n}$, where $n=2m$, $s_1=\gcd(s, p^m+1)$ and $s_2=\gcd(s-1, p^m+1)$. Then
  \makecell{ $\sum\limits_{u,v\in \mathbb{F}_{p^n} }W_{F}(u,v)^3 =$~~~~~~~~~~~~~~~~~~~~~~~~~~~~~~~~~~~~~~~~ \\~~~~~~~~~  $p^{2n}\big(3\cdotp p^{n}-2+(p^n-1)(p^m + (s_1 - 1)(s_1 - 2) $\\ ~~~~~~~~~~~~~~~~~~~~~~~~~~~~~~~~~~~~~~ $+(s_2 - 1)(s_2 - 2)-2)\big).$}
\end{corollary}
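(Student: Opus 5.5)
By Lemma \ref{powersum} with $r=3$, the left-hand side equals $p^{2n}N_3$. Here $N_3$ is the number of $(x_1,x_2,x_3)\in\mathbb{F}_{p^n}^3$ with $x_1+x_2+x_3=0$ and $x_1^d+x_2^d+x_3^d=0$. So it suffices to show
\[
N_3 = 3p^n-2+(p^n-1)\bigl(\triangle_F(1,-1)-2\bigr),
\]
with $\triangle_F(1,-1)=p^m+(s_1-1)(s_1-2)+(s_2-1)(s_2-2)$. That value of $\triangle_F(1,-1)$ is the count of solutions of $\Delta(x)=1$ found in the proof of Theorem \ref{niho theo}, after we rewrite the equation appropriately.

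\textbf{Reduction.} Eliminate $x_3=-(x_1+x_2)$. We will need that $d$ is odd when $p$ is odd. Indeed, $d=s(p^m-1)+1$ and $p^m-1$ is even, so $d$ is odd. When $p=2$ signs are irrelevant. Hence $(-(x_1+x_2))^d=-(x_1+x_2)^d$, and the second equation becomes
\[
(x_1+x_2)^d-x_1^d-x_2^d=0.
\]
If $x_1=0$, this holds for every $x_2$, giving $p^n$ solutions. Otherwise put $x_2=x_1y$ and divide by $x_1^d$. We obtain $(y+1)^d-y^d=1$, so each solution $y$ of $\Delta(y)=1$ gives $p^n-1$ triples. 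Therefore
\[
N_3=p^n+(p^n-1)\,\triangle_F(1,1).
\]

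\textbf{Consistency check.} We must reconcile this with the stated form. Note that $\triangle_F(1,1)$ includes the trivial solutions $y=0$ and $y=-1$. Writing $\triangle_F(1,1)=2+T$, where $T$ counts solutions in $\mathbb{F}_{p^n}^{\sharp}$, we get
\[
N_3=p^n+2(p^n-1)+(p^n-1)T=3p^n-2+(p^n-1)T.
\]
The statement therefore asks that $T=\triangle_F(1,1)-2$ with $\triangle_F(1,1)=p^m+(s_1-1)(s_1-2)+(s_2-1)(s_2-2)$. This matches exactly: in the proof of Theorem \ref{niho theo}, the count of solutions of $\Delta(x)=1$ over all of $\mathbb{F}_{p^n}$ is $2$ (from $x=0,-1$) plus $p^m-2$ (from $C_{0,0}$) plus $(s_1-1)(s_1-2)+(s_2-1)(s_2-2)$ (from Subcase 1), which totals $p^m+(s_1-1)(s_1-2)+(s_2-1)(s_2-2)$.

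\textbf{Main obstacle.} The only delicate points are the sign handling for $x_3$, which depends on $d$ being odd, and the bookkeeping that $b=1$ is exactly the maximal entry computed in Theorem \ref{niho theo}, including its trivial solutions $x=0,-1$. Once these are settled, the identity follows directly.
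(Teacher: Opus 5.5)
Your proposal is correct and follows the paper's proof: Lemma \ref{powersum} with $r=3$, normalizing by a nonzero coordinate to reach $(y+1)^d-y^d=1$, and the count $p^m+(s_1-1)(s_1-2)+(s_2-1)(s_2-2)$ of its solutions from Theorem \ref{niho theo}; splitting on $x_1=0$ rather than on $x_1x_2x_3=0$ is only a bookkeeping difference, and your explicit use of the oddness of $d$ for odd $p$ spells out a step the paper leaves implicit. One slip: in your opening paragraph $\triangle_F(1,-1)$ should read $\triangle_F(1,1)$ (for odd $p$ these are different entries), as your own reduction later correctly shows.
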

\begin{proof}
We only need to count the number of solutions of the system
\begin{equation}\label{third111}\left \{
\begin{array}{cll}x_1+x_2+x_3=0,\\
x_1^{d}+x_2^{d}+x_3^{d}=0,\\
\end{array}\right.
\end{equation}
in $\left(\mathbb{F}_{p^n}\right)^3$. If $x_1x_2x_3=0$, we can assume that $x_3=0$, which provides $p^{n}$ solutions of (\ref{third111}): $x_1=x_2$, $x_3=0$. In total, there are $3\cdotp p^{n}-2$ solutions of (\ref{third111}) satisfying $x_1x_2x_3=0$, because $(0,0,0)$ is counted triple. If $x_1x_2x_3\neq0$, then the number of solutions of (\ref{third111}) is $(p^{n}-1)\cdotp N_1'$, where $N_1'$ is the number of solutions to
\begin{equation}\label{n3' equation}
    \left \{
\begin{array}{cll}y_1+y_2=1,\\
y_1^{d}+y_2^{d}=1,\\
\end{array}\right.
\end{equation}
satisfying $y_1y_2\neq0$. It follows from Theorem \ref{niho theo} that
\[N_1' =p^m + (s_1 - 1)(s_1 - 2) + (s_2 - 1)(s_2 - 2)-2.\]
This completes the proof.
\end{proof}

\begin{remark}
  \begin{enumerate}
    \item[(1).] For the power function \(F(x) = x^d\) defined over \(\mathbb{F}_{p^n}\), observe that when \(\gcd(d, p^n - 1) = 1\), for any fixed \(u \in \mathbb{F}_{p^n}^*\), the value distribution of \(W_F\bigl(1, \, v / u^{d^{-1}}\bigr)\) as \(v\) runs through \(\mathbb{F}_{p^n}^*\) coincides with that of \(W_F(1, v)\) as \(v\) runs through \(\mathbb{F}_{p^n}^*\), where \(d^{-1}\) denotes the multiplicative inverse of \(d\) modulo \(p^n - 1\). Hence, if \(\gcd(d, p^n - 1) = 1\),  the value distribution of \(W_F(u, v)\) while \((u, v)\) ranges over \(\mathbb{F}_{p^n}^* \times \mathbb{F}_{p^n}\) is transformed into studying the value distribution of \(W_F(1, v)\) as \(v\) ranges over \(\mathbb{F}_{p^n}\), which is essentially cross-correlation distribution.

    \item[(2).]  The value distribution of cross-correlations for the Niho type power function \(F(x)=x^{s(p^m-1)+1}\) over \(\mathbb{F}_{p^n}\) has been extensively studied. For instance,
    \begin{itemize}
    \item \(s=2\) with \(p^m\not\equiv 2\pmod{3}\) \cite{H2};
    \item \(s=3\) with \(p=2\) and \(\gcd(3\cdot 2^{m}-2,2^n-1)=1\) \cite{xia1};
    \item \(s=3\) with \(p=3\) and \(m\not\equiv 2\pmod{4}\) \cite{xia2};
    \item \(s=3\) with \(p \geq 5\) and \(\gcd(3\cdot p^{m}+2,p^n-1)=1\) \cite{xiong}.
    \end{itemize}
It is worth noting that determining the above cross-correlation value distributions requires computing \(\sum_{v\in\mathbb{F}_{p^n}}W_{F}(1,v)^3\), which can be promptly obtained by Corollary \ref{third sum 1}.

    \item[(3).] In 1972, Niho conjectured \cite{Niho1972} that when \(m\) is even, the cross-correlation of \(x^{4(2^m-1)+1}\) over \(\mathbb{F}_{2^{2m}}\) is at most five-valued. This conjecture was proved by Helleseth, Katz, and Li in 2021 \cite{s4}. It is therefore natural to proceed with determining the value distribution of this cross-correlation. Through Corollary \ref{third sum 1}, a fourth equation concerning the value distribution can be derived, so only one additional equation is needed to fully determine the distribution.
\end{enumerate}
\end{remark}

\subsection{Walsh spectrum and cyclic codes of Niho type locally-APN power functions}
In this subsection, we will characterize the equivalence between locally-APN and four-valued Walsh transforms  for power functions of Niho type. Equivalently, the corresponding cyclic codes have exactly four nonzero weights. Before stating the main result, we need to introduce some auxiliary tools.

Let $h_1(x)$ and $h_d(x)$ be the minimal polynomials of $\psi^{-1}$ and $\psi^{-d}$ over $\mathbb{F}_{p}$, respectively. Let $\mathcal{C}_{1,d}$ be the cyclic code with parity-check polynomial $h_1(x)h_d(x)$. By Delsarte's Theorem \cite{Delsarte_theorem}, the cyclic code $\mathcal{C}_{1,d}$ can be expressed as
\begin{equation*}
    \mathcal{C}_{1,d}=\left\{c_{u,v}=\left(\Tr ^{n}_1(u\psi^{id}+v\psi^i)\right)_{i=0}^{p^n-2}\mid u,v\in \mathbb{F}_{p^{n}} \right\}.
\end{equation*}

From Theorem \ref{niho theo} and Equation (\ref{very importent ide}), the result in Corollary \ref{four sum 1} is  easily derived and we omit its proof. This will be useful in the proof of Theorem \ref{nihowalsh}.
\begin{corollary}\rm\label{four sum 1}
  Let $F(x)=x^{d}$ be a Niho type locally-APN function defined over $\mathbb{F}_{p^n}$, where $d=s(p^m-1)+1$ and $n=2m$. Then the number of solutions $(x_1,x_2,x_3,x_4)\in\mathbb{F}_{p^n}^4$ of the system of equations
\begin{equation*}
\left \{
\begin{array}{cll}x_1+x_2+x_3+x_4=0,\\
x_1^{d}+x_2^{d}+x_3^d+x_4^{d}=0,\\
\end{array}\right.
\end{equation*}
is $4p^{2n}-2p^{3m}-3p^n+2p^m$.
\end{corollary}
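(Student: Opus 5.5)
The plan is to reduce the count to the quantity $M$ of Lemma \ref{fouride} and then evaluate $\sum_i i^2\omega_i$ using the differential spectrum from Theorem \ref{niho theo}.

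\textbf{Step 1: matching the two systems.} The system in the statement has all plus signs, whereas Lemma \ref{fouride} counts solutions of $x_1-x_2+x_3-x_4=0$, $x_1^d-x_2^d+x_3^d-x_4^d=0$. I will show these two counts coincide, using a parity argument on $d$.
\begin{itemize}
\item If $p$ is odd, then $p^m-1$ is even, so $d=s(p^m-1)+1$ is odd and $(-x)^d=-x^d$. The substitution $(x_1,x_2,x_3,x_4)\mapsto(x_1,-x_2,x_3,-x_4)$ is then a bijection of $\mathbb{F}_{p^n}^4$ that carries one system onto the other.
\item If $p=2$, signs are irrelevant and the two systems are literally the same.
\end{itemize}
Hence the required number equals $M$ in every characteristic.

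\textbf{Step 2: evaluating $\sum_i i^2\omega_i$.} By Theorem \ref{niho theo}, a locally-APN Niho function has
\[
\omega_0=\frac{p^{2m}+p^m-2}{2},\qquad \omega_2=\frac{p^{2m}-p^m}{2},\qquad \omega_{p^m}=1,
\]
with all other $\omega_i$ equal to zero. Therefore
\[
\sum_i i^2\omega_i = 4\cdot\frac{p^{2m}-p^m}{2}+p^{2m} = 3p^{n}-2p^m .
\]

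\textbf{Step 3: solving for $M$.} Rearranging \eqref{very importent ide} gives
\[
M = p^{2n} + (p^n-1)\bigl(3p^n-2p^m\bigr).
\]
Expanding with $n=2m$, so that $p^n\cdot p^m=p^{3m}$, yields
\[
M = p^{2n} + 3p^{2n} - 2p^{3m} - 3p^n + 2p^m = 4p^{2n}-2p^{3m}-3p^n+2p^m,
\]
which is the claimed count.

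The only delicate point is the sign issue in Step 1, namely recognising that the oddness of $d$ for odd $p$ lets the alternating system of Lemma \ref{fouride} be identified with the all-plus system. The remaining steps are direct arithmetic.
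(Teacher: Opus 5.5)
Your proof is correct and follows the paper's own route: the paper states that this corollary follows directly from the differential spectrum in Theorem~\ref{niho theo} and identity \eqref{very importent ide}, which is exactly your computation $\sum_i i^2\omega_i=3p^n-2p^m$ followed by solving for $M$. Your Step~1 matches the all-plus system with the alternating system of Lemma~\ref{fouride}, using that $d$ is odd when $p$ is odd; the paper leaves this implicit, and your argument for it is correct.
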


Next we give our main result.

\begin{theorem}\rm\label{nihowalsh}
Let \(F(x)=x^d=x^{s(p^m-1)+1}\) be a Niho type power function over \(\mathbb{F}_{p^n}\) with \(n=2m\). Then the following three statements are equivalent.

(1) \(F\) is locally-APN.

(2) The Walsh spectrum of \(F\) takes four values in \(\{-p^m, 0, p^m, 2p^m\}\). Moreover, when \((u,v)\) runs through \(\mathbb{F}_{p^{n}}^* \times \mathbb{F}_{p^{n}}\), the value distribution of \( W_F(u,v) \) is given by
\begin{equation}\label{WF}
W_F(u, v) =
\left\{\begin{array}{llll}
-p^m, & \text{occurs } \frac{p^{4m}-p^{3m}-p^{2m}+p^m}{3} \text{ times}, \\[6pt]
0, & \text{occurs } \frac{p^{4m}-p^{3m}-p^{2m}+p^m}{2} \text{ times}, \\[6pt]
p^m, & \text{occurs } p^{3m}-p^m\text{ times}, \\[6pt]
2p^m, & \text{occurs } \frac{p^{4m}-p^{3m}-p^{2m}+p^m}{6} \text{ times}.
  \end{array}\right.
\end{equation}

(3)  The cyclic codes $\mathcal{C}_{1,d}$ have four nonzero weights: $p^{m-1}(p-1)(p^m + k)$ for $k = 0, 1, -1, -2$.
\end{theorem}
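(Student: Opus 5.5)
The plan is to translate everything into the counting function $V(u,v)$ of Lemma \ref{niho the}. Take $d_1=d$ (so $g_1=s$) and $d_2=1$ (so $g_2=0$). Replacing $v$ by $-v$, Lemma \ref{niho the} gives $W_F(u,v)=(V(u,-v)-1)p^m$, where $V\ge 0$ is an integer. Hence every Walsh value lies in $\{(k-1)p^m : k\in\mathbb{Z}_{\ge 0}\}$, and the task is to show that, for $u\neq 0$, locally-APN forces $V\in\{0,1,2,3\}$. I will work with the power sums $S_r=\sum_{u\ne0,v}W_F(u,v)^r$ for $r=0,1,2,3,4$. The $u=0$ terms contribute only $W_F(0,0)=p^n$.
\begin{itemize}
\item $S_0=p^n(p^n-1)$.
\item $S_1$ comes from $\sum_v W_F(u,v)=p^n$.
\item $S_2$ comes from Lemma \ref{powersum} with $N_2=p^n$. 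Here I use that $d$ is odd when $p$ is odd, so $x^d+(-x)^d=0$; for $p=2$ signs are irrelevant.
\item $S_3$ comes from Corollary \ref{third sum 1}.
\item $S_4$ comes from Lemma \ref{powersum} together with Corollary \ref{four sum 1}, after the substitution $x_4\mapsto -x_4$ (again by oddness of $d$).
\end{itemize}

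\textbf{(1)$\Rightarrow$(2).} By Theorem \ref{niho theo}, locally-APN gives $(s_1-1)(s_1-2)+(s_2-1)(s_2-2)=0$, so $S_3$ is explicit, and Corollary \ref{four sum 1} gives $S_4$. Writing $V=W_F/p^m+1$, these five sums determine $\sum_{u\ne0,v}V(V-1)(V-2)(V-3)$. For every nonnegative integer $V$ this product is $\ge 0$, and it vanishes exactly when $V\in\{0,1,2,3\}$. So the key computation is to check that this sum equals $0$. That forces $W_F\in\{-p^m,0,p^m,2p^m\}$. The four multiplicities then follow by solving the $4\times4$ Vandermonde system given by $S_0,\dots,S_3$, and should reproduce (\ref{WF}). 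I expect this to be routine but delicate bookkeeping.

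\textbf{(2)$\Rightarrow$(1).} From (\ref{WF}) I compute $S_3$ and $S_4$. Comparing $S_3$ with Corollary \ref{third sum 1} yields $(s_1-1)(s_1-2)+(s_2-1)(s_2-2)=0$, so $\Delta(x)=1$ has exactly $p^m$ solutions. Next, $S_4$ gives $M$ via Lemma \ref{powersum}, and Lemma \ref{fouride} then gives $\sum_b n_b^2=2p^n-p^m+p^{2m}$, where $n_b=\triangle_F(1,b)$. Since $\sum_b n_b=p^n$, the $b\ne1$ satisfy $\sum n_b=p^n-p^m$ and $\sum n_b^2=2(p^n-p^m)$.
\begin{itemize}
\item The involution $x\mapsto -1-x$ makes $n_b$ even except possibly for the single $b$ with fixed point $x=-1/2$ (when $p$ odd).
\item Even nonnegative integers satisfy $n_b^2\ge 2n_b$, with equality iff $n_b\in\{0,2\}$.
\end{itemize}
So all $n_b\in\{0,2\}$ for $b\ne 1$, up to the one exceptional odd value. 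The exceptional value $b_0=\Delta(-1/2)=2^{1-d}$ needs separate treatment. This is the step I expect to be the main obstacle. I would first try to show directly that $b_0\in\mathbb{F}_p$, so it is irrelevant for the locally-APN condition; otherwise I would show that $n_{b_0}=1$ is forced by the deficit in $n_b^2-2n_b=-1$ balancing. Either way, $\max_{b\notin\mathbb{F}_p}n_b\le 2$.

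\textbf{(2)$\Leftrightarrow$(3).} By the standard character-sum argument,
$$\mathrm{wt}(c_{u,v})=\frac{p-1}{p}\,p^n-\frac1p\sum_{c\in\mathbb{F}_p^*}W_F(cu,-cv).$$
Because $W_F$ is a rational integer, its Galois conjugates coincide, so $W_F(cu,-cv)=W_F(u,-v)$ for all $c\in\mathbb{F}_p^*$. Hence
$$\mathrm{wt}(c_{u,v})=p^{m-1}(p-1)\bigl(p^m-W_F(u,-v)/p^m\bigr).$$
The codewords with $u=0$, $v\neq0$ have weight $p^{n-1}(p-1)$, which is the case $k=0$. So $W_F/p^m\in\{-1,0,1,2\}$ corresponds exactly to $k\in\{1,0,-1,-2\}$. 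Conversely, if the weights are restricted to these four values, then $V\in\{0,1,2,3\}$, and the moment computation above recovers (\ref{WF}).
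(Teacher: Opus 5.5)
\textbf{Gap in (3)$\Rightarrow$(2).} Your directions (1)$\Rightarrow$(2), (2)$\Rightarrow$(1) and (2)$\Rightarrow$(3) work. The last sentence, which is all you give for (3)$\Rightarrow$(2), has a genuine gap. The claim ``the moment computation above recovers (\ref{WF})'' needs $S_3$. You evaluated $S_3$ using $Q:=(s_1-1)(s_1-2)+(s_2-1)(s_2-2)=0$, and you only know this under (1).

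From (3) alone, Corollary~\ref{third sum 1} gives only $S_3=p^{3m}\bigl(p^{4m}-p^{2m}+Q(p^{3m}-p^m)\bigr)$. Solving the $4\times 4$ system built from $S_0,\dots,S_3$ then gives the value $p^m$ the multiplicity $(1-Q/2)(p^{3m}-p^m)$. Since $Q$ is even, nonnegativity only yields $Q\in\{0,2\}$.

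The case $Q=2$ (one of $s_1,s_2$ equal to $3$, the other in $\{1,2\}$) is not excluded by any of your moment identities. With the values restricted to $\{-p^m,0,p^m,2p^m\}$, the fourth-moment identity gives $\sum_b n_b^2=3p^{2m}-2p^m+2Qp^m$. For $Q=2$ this is consistent with $n_1=p^m+2$ and $n_b\in\{0,2\}$ for all other $b$, and the spectrum would be $\{-p^m,0,2p^m\}$ with $p^m$ absent. So ``weights restricted to these four values'' is not enough for your method.

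The missing step is to use that (3) asserts the weight $p^{m-1}(p-1)(p^m-1)$ actually occurs. It can only come from some $u\neq0$ with $W_F=p^m$, so that multiplicity is positive. This forces $Q=0$; $S_3$ is then as in (1)$\Rightarrow$(2), and the Vandermonde solve gives (\ref{WF}).

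\textbf{The other directions.} (1)$\Rightarrow$(2) is the paper's argument. Its identity $\sum_j t_j(t_j-2)(t_j-1)(t_j+1)E_{t_j+1}=0$ is exactly your $\sum V(V-1)(V-2)(V-3)=0$. You avoid its case split on $(s_1,s_2)$, because positivity of the solved multiplicities already shows all four values occur.

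For (2)$\Rightarrow$(1) your route differs and is more robust. The paper gets $s_1,s_2\in\{1,2\}$ from the root counts (i)--(iv), then inserts Corollary~\ref{four sum 1}, which presupposes locally-APN. You instead read $Q=0$ off $S_3$ and $\sum_b n_b^2$ off the $S_4$ computed from (\ref{WF}) itself, which avoids that circularity.

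The step you call the main obstacle is empty. Since $-1/2\in\mathbb{F}_p\subseteq\mathbb{F}_{p^m}$, we have $b_0=\Delta(-1/2)=1$; equivalently $2^{1-d}=1$ because $d\equiv1\pmod{p-1}$. So the only odd fibre is the $b=1$ fibre, of odd size $p^m$, and every $n_b$ with $b\neq1$ is even. Your fallback would not have worked anyway: for even $n$ one has $8\mid n(n-2)$, so a deficit of $-1$ can never be balanced.

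Correct one slip: $\sum_b n_b^2=3p^{2m}-2p^m=p^{2m}+2(p^n-p^m)$, not $2p^n-p^m+p^{2m}$. Your next line already uses the correct value.

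Finally, your Galois-conjugation argument for (2)$\Rightarrow$(3) is a valid alternative to the paper's substitution $x\mapsto xy$ using $y^d=y$ on $\mathbb{F}_p$.
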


\begin{proof}
Denote \( s_1 = \gcd(s, p^m+ 1) \), \( s_2 = \gcd(s - 1, p^m + 1) \) and \( \mu_{p^m+1}^i  = \{x^{s_i}: x \in \mu_{p^m+1}\} \) for \( i = 1, 2 \). Note that \( \mu_{p^m+1}^i \) are subgroups of \( \mu_{p^m+1} \). Put $d_1=s(p^m-1)+1$ and $d_2=1$ in Lemma \ref{niho the}, we have
\begin{equation*}
  W_F(u,-v)=(V(u,v)-1)\cdot p^m,
\end{equation*}
where $V(u,v)$ is the number of common solutions to
  \begin{equation}\label{Vuv1}
     \left \{\begin{array}{ll}
     \bar vz^{2s-1}+\bar uz^{s-1}+uz^{s}+v=0,\\
     z^{p^m+1}=1.\\
\end{array}\right.
  \end{equation}
Denote by
\begin{equation*}
  E_i=\#\{(u,v) \in \mathbb{F}_{p^n}^2\setminus(0,0):  V(u,v)=i \}.
\end{equation*}
Furthermore, let \( V_1(u,v) \) be the number of solutions to (\ref{Vuv1}) corresponding to \( u,v \in \mu_{p^m+1} \). From (\ref{Vuv1}) we obtain in the case \( u,v \in \mu_{p^m+1} \) that

\[
\bar vz^{2s-1}+\bar uz^{s-1}+uz^{s}+v= \overline{v}(z^{s-1} + uv)(z^s + \overline{u}v) = 0.
\]
Therefore, we get
\begin{itemize}
    \item[(i)] \( V_1(u,-1) = s_1 \) for \( u \in \mu_{p^m+1}^1 \setminus \mu_{p^m+1}^2 \),
    \item[(ii)] \( V_1(u,-1) = s_2 \) for \( u \in \mu_{p^m+1}^2 \setminus \mu_{p^m+1}^1 \),
    \item[(iii)] \( V_1(u,-1) = s_1 + s_2 \) for \( u \in \mu_{p^m+1}^1 \cap \mu_{p^m+1}^2 \), \( u \neq 1 \),
    \item[(iv)] \( V_1(1,-1) = s_1 + s_2 - 1 \).
\end{itemize}

\textbf{Proof of (1) $\Rightarrow$ (2).}
According to Theorem 1, \(s_1, s_2 \in \{1, 2\}\) when \(F\) is locally-APN. We therefore break down the discussion into the following two cases.

{\textbf{Case 1:}} When \( s_1 = s_2 = 1 \), it follows from (iii) that \( V(u,-1) = 2 \) for any \( u \in \mu_{p^m+1} \setminus \{1\} \), and from (iv) that \( V(1,-1) = 1 \). Therefore, we have \( E_1 > 0 \) and \( E_2 > 0 \), which implies that the Walsh transform is at least two-valued. We assume that it is $(k + 3)$-valued (ignoring \( W_F(0,0)=p^{2m} \)). Then, by Lemma \ref{powersum} and Corollaries \ref{third sum 1} and \ref{four sum 1}, we obtain the five equations as follows:

\begin{equation}\label{1 eq1}
  E_{0}+E_{1}+E_2+\sum_{j=1}^k E_{t_j+1}=p^{2n}-1,
\end{equation}
\begin{equation}\label{1 eq2}
  -E_{0}+E_2+\sum_{j=1}^k t_j E_{t_j+1} =p^{3m}-p^m,
\end{equation}
\begin{equation}\label{1 eq3}
 E_{0}+E_2+\sum_{j=1}^k t_j^2 E_{t_j+1}=p^{2n}-p^{n},
\end{equation}
\begin{equation}\label{1 eq4}
 -E_{0}+E_2+\sum_{j=1}^k t_j^3 E_{t_j+1}=p^{2n}-p^{n},
\end{equation}
\begin{equation}\label{1 eq5}
 E_{0}+E_2+\sum_{j=1}^k t_j^4 E_{t_j+1}=3p^{2n}-2p^{3m}-3p^n+2p^m,
\end{equation}
where \( 2 \leq t_1 < t_2 < \cdots < t_k \).
By (\ref{1 eq3}) and (\ref{1 eq4}), we obtain
\begin{equation}\label{E01}
  E_0 = \frac12 \sum_{j=1}^k t_j^2(t_j - 1) E_{t_j+1} .
\end{equation}
Subtracting  (\ref{1 eq2}) from (\ref{1 eq3}) yields
\begin{equation}\label{E02}
2E_0 + \sum_{j=1}^k (t_j^2 - t_j) E_{t_j+1} = p^{2n} - p^{3m} - p^n + p^m .
\end{equation}
Substituting (\ref{E01}) into (\ref{E02}) then gives
\begin{equation}\label{E03}
\sum_{j=1}^k (t_j^3 - t_j) E_{t_j+1} = p^{2n} - p^{3m} - p^n + p^m .
\end{equation}
Next, subtracting (\ref{1 eq3}) from (\ref{1 eq5}) yields
\begin{equation}\label{E04}
\frac12 \sum_{j=1}^k (t_j^4 - t_j^2) E_{t_j+1} = p^{2n} - p^{3m} - p^n + p^m .
\end{equation}
Finally, (\ref{E03}) and (\ref{E04}) leads to
\[
\sum_{j=1}^k t_j (t_j - 2)(t_j - 1)(t_j + 1) E_{t_j+1} = 0 .
\]
It follows that \( E_{t_i+1} = 0 \) for all \( 2\leq i \leq k \). If \( t_1 = 2 \) and \( E_3 = 0 \), then from (\ref{E01}) we obtain \( E_0 = 0 \). In this case, using (\ref{1 eq2}) and (\ref{1 eq3}) again, \( E_2 \) would simultaneously equal \( p^{3m} - p^m \) and \( p^{2n} - p^n \), a contradiction. Hence, \( E_0 \neq 0 \) and \( E_3 \neq 0 \). Therefore, only \( E_0, E_1, E_2, E_3 \) are nonzero, which means that the Walsh spectrum of \(F\) takes exactly the four values \(-p^m, 0, p^m, 2p^m\). Then, by (\ref{1 eq1})-(\ref{1 eq4}), we obtain the value distribution of \( W_F(u,v) \).

{\textbf{Case 2:}}  In the case that \( s_1 = 1, s_2 = 2 \) or \( s_1 = 2, s_2 = 1 \), we deduce from (iv) that \( E_2 > 0 \), and from (i) and (ii) that \( E_1 > 0 \). Furthermore, (iii) implies \( E_3 > 0 \). This  means that the Walsh transform is at least three-valued. Assuming it is \((k + 4)\)-valued (ignoring \( W_F(0,0)=p^{2m} \)), we thereby obtain the following system of five equations

\begin{equation}\label{2 eq1}
  E_{0}+E_{1}+E_2+E_3+\sum_{j=1}^k E_{t_j+1}=p^{2n}-1,
\end{equation}
\begin{equation}\label{2 eq2}
  -E_{0}+E_2+2E_3+\sum_{j=1}^k t_j E_{t_j+1}=p^{3m}-p^m,
\end{equation}
\begin{equation}\label{2 eq3}
 E_{0}+E_2+4E_3+\sum_{j=1}^k t_j^2 E_{t_j+1}=p^{2n}-p^{n},
\end{equation}
\begin{equation}\label{2 eq4}
 -E_{0}+E_2+8E_3+\sum_{j=1}^k t_j^3 E_{t_j+1}=p^{2n}-p^{n},
\end{equation}
\begin{equation}\label{2 eq5}
 E_{0}+E_2+16E_3+\sum_{j=1}^k t_j^4 E_{t_j+1}=3p^{2n}-2p^{3m}-3p^n+2p^m,
\end{equation}
where \( 3 \leq t_1 < t_2 < \cdots < t_k \). Using an approach similar to that in Case 1, we obtain
\[
\sum_{j=1}^k t_j (t_j - 2)(t_j - 1)(t_j + 1) E_{t_j+1} = 0 .
\]
It follows that \( E_{t_i+1} = 0 \) for all \( 1\leq i \leq k \). If \( E_0 = 0 \), a contradiction follows from (\ref{2 eq3}) and (\ref{2 eq4}). Therefore, only \( E_0, E_1, E_2, E_3 \) remain nonzero.

\textbf{Proof of (2) $\Rightarrow$ (1).}
If \( W_F(u, v) \) takes values in \(\{-p^m, 0, p^m, 2p^m\}\),   then from (i)-(iv) we can still deduce that \( s_1, s_2 \in \{1, 2\} \), and hence \( \triangle_F = p^m \). For each fixed \( b \in \mathbb{F}_{p^n} \), suppose the number of solutions to the equation
\[
(x+1)^{s(p^m-1)+1} - x^{s(p^m-1)+1} = b
\]
takes values in \( \{0, t_1, \ldots, t_k, p^m\} \), where \( 0 < t_1 < \cdots < t_k < p^m \). Then, using (\ref{two basic ide}), (\ref{very importent ide}) and Corollary \ref{four sum 1}, we obtain the following system of equations
\begin{numcases}{}
\omega_0+\omega_{t_1}+\cdots+\omega_{t_k}+1=p^n    \label{1w} \\
{t_1}\omega_{t_1}+\cdots+{t_k}\omega_{t_k}+p^m=p^n  \label{2w}  \\
 {t_1^2}\omega_{t_1}+\cdots+{t_k^2}\omega_{t_k}+p^n=3p^{n}-2p^{m} \label{3w}.
\end{numcases}
By subtracting (\ref{2w}) from (\ref{3w}), we obtain
\begin{equation}\label{wb}
\sum_{j=1}^k t_j(t_j - 1) \omega_{t_j} = p^n-p^m .
\end{equation}
Then, subtracting (\ref{wb}) from (\ref{2w}) yields
\[
\sum_{j=1}^kt_j(t_j - 2) \omega_{t_j} = 0 .
\]
Therefore, we obtain $\omega_{2} \neq 0$, and for all $2 < i < p^m$, we have $\omega_{i} = 0$, which implies that $F$ is locally-APN. This completes the proof.

\textbf{Proof of  (2) $\Leftrightarrow$ (3).} For a codeword $c_{u,v}$ of $\mathcal{C}_{1,d}$, we use exponential sums to
calculate its Hamming weight
\begin{equation*}
    \begin{aligned}
        \omega_H(c_{u,v}) &=p^{n}-1-\# \left\{x\in\mathbb{F}_{p^{n}}^*\mid \Tr_1^{n}(ux^{d}+vx)=0 \right\}\\
        &=p^{n}-\frac{1}{p}\sum\limits_{y\in \mathbb{F}_{p}}\sum\limits_{x\in \mathbb{F}_{p^{n}}}\omega_p^{y{\Tr ^{n}_1(ux^{d}+vx)}}\\
        &=p^{n-1}(p-1)-\frac{1}{p}\sum\limits_{y\in \mathbb{F}_{p}^*}\sum\limits_{x\in \mathbb{F}_{p^{n}}}\omega_p^{\Tr ^{n}_1(u{(xy)}^{d}+vxy)}\\
        &=p^{n-1}(p-1)-\frac{p-1}{p}\sum\limits_{x\in \mathbb{F}_{p^{n}}}\omega_p^{\Tr ^{n}_1(u{x}^{d}+vx)}.\\
    \end{aligned}
\end{equation*}
Therefore, the weight distribution of the cyclic code $\mathcal{C}_{1,d}$ is completely determined by the value distribution of the Walsh transform
\[
W_{x^d}(u,v)=\sum_{x\in\mathbb{F}_{p^n}}\zeta_p^{{\rm{Tr}}_1^n(ux^d-vx)}.
\]
If $u=0$, we have
\begin{equation*}
    \omega_H(c_{0,v})=
	\left\{ \begin{array}{llll}
		0&\mbox{if}&v=0,\\
		p^{n-1}(p-1)&\mbox{if}&v\neq0.
	\end{array}\right.
\end{equation*}
If $u\neq0$, the value distribution of $\omega_H(c_{u,v})$ follows from the Walsh spectrum of $F$.
\end{proof}

\begin{remark}\rm\label{F1F2}
Consider the following two families of power functions over $\mathbb{F}_{p^n}$ with $n = 2m$.
\begin{itemize}
    \item \(F_1(x) = x^{d_1}\) over $\mathbb{F}_{2^n}$, where \(d_1 = s(2^m - 1) + 1\) with
          \[
          s = \begin{cases}
              2^r \cdot (2^r - 1)^{-1}, & \text{if } \gcd(2^r - 1, 2^m + 1) = 1,\\[4pt]
              2^r \cdot (2^r + 1)^{-1}, & \text{if } \gcd(2^r + 1, 2^m + 1) = 1,
          \end{cases}
          \]
          under the conditions \(r < m\) and \(\gcd(r, m) = 1\);
    \item \(F_2(x) = x^{d_2}\) over $\mathbb{F}_{p^n}$ for any prime $p$, where \(d_2 = 2(p^m - 1) + 1\).
\end{itemize}
It was shown in \cite{via, H2, xias=2} that both $F_1(x)$ and $F_2(x)$ have four-valued Walsh spectrum contained in $\{-p^m, 0, p^m, 2p^m\}$.
By Theorem \ref{nihowalsh}, it follows immediately that $F_1(x)$ and $F_2(x)$ are locally-APN. Conversely, consider the function \(F_3(x) = x^{s(2^m - 1) + 1}\) over $\mathbb{F}_{2^{2m}}$, where $\gcd(2^k + 1, 2^m + 1) = 1$, $s = (2^k + 1)^{-1}$ modulo $2^m + 1$, and $\gcd(k, m) = 1$. Both $F_2(x)$ and $F_3(x)$ are known to be locally-APN in \cite{TIT.he,yan.niho}. By Theorem \ref{nihowalsh}, it follows immediately that both functions have four-valued Walsh spectrum.

It is suggested that the function $F_1(x)$ might include all Niho type power functions whose Walsh spectrum consists of $\{-2^m, 0, 2^m, 2^{m+1}\}$  as a conjecture in \cite{via}. However, by Theorem \ref{nihowalsh}, this is equivalent to the statement that $F_1(x)$ covers all Niho type locally-APN power functions.
Thus, classifying Niho type locally-APN power functions would resolve the conjecture of \cite{via}, and conversely, proving the conjecture would yield a complete classification of Niho type locally-APN power functions.
This equivalence motivates the following conjecture.
\end{remark}

\begin{Conjecture}\rm
The function $F_1(x) = x^{s(2^m - 1) + 1}$ over $\mathbb{F}_{2^{2m}}$, with $s$ defined as above, covers all Niho type locally-APN power functions over $\mathbb{F}_{2^{2m}}$.
\end{Conjecture}

\subsection{Feistel boomerang properties of Niho type locally-APN power functions}\label{sec fei}

Throughout this subsection, we always assume that \(F(x) = x^{d_3}\) is a Niho type locally-APN defined over \(\mathbb{F}_{p^n}\), where $d_3=s(p^m - 1) + 1$ and \(n = 2m\).
We focus on investigating two properties that their Feistel Boomerang Connectivity Table (in the case $p=2$) and second-order zero differential spectra (in the case $p>2$).

\begin{theorem}\rm\label{th11}
For any $(a,b)\in\mathbb{F}_{2^n}^2$, the value distribution of the multi-set $\left\{{\rm FBCT}_F(a,b): a,b \in \mathbb{F}_{2^n} \right\}$ is shown in Table \ref{fbct22}.
\begin{table}[H]
\centering
 \caption{ Feistel Boomerang Connectivity Table}
\label{fbct22}
\begin{tabular}{|c|c|c|}
 \hline
  ${\rm FBCT}_F(a,b)$ &  $(a,b)$  &{\rm Multiplicity} \\\hline
    $2^n$  & $ab(a+b)=0$  &  $3\cdot2^n-2$  \\ \hline
    $2^m$ & $ab(a+b)\neq0$, $\frac{a}{b}\in \mathbb{F}_{2^m}^{\sharp}$  & $(2^m-2)(2^n-1)$ \\ \hline
    0  &  {\rm otherwise}  & $(2^n-2^m)(2^n-1)$ \\\hline
\end{tabular}
\end{table}
\end{theorem}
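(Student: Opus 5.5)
Our plan is to reduce ${\rm FBCT}_F(a,b)$ to the differential counts $\triangle_F(1,\cdot)$ already determined in Theorem \ref{niho theo}, using the homogeneity of the power function. The case $ab(a+b)=0$ is immediate and gives $2^n$. The pairs with $a=0$ or $b=0$ or $a=b$ number $3\cdot 2^n-2$ by inclusion--exclusion. So we fix $ab(a+b)\neq 0$. Substituting $x=by$ and dividing by $b^{d_3}$ gives
\[
{\rm FBCT}_F(a,b)=\#\{y: F(y+c+1)+F(y+c)+F(y+1)+F(y)=0\},\qquad c=a/b\in\mathbb{F}_{2^n}^{\sharp}
\]
(in characteristic $2$, $\mathbb{F}^{\sharp}=\mathbb{F}\setminus\{0,1\}$). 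The condition reads $D_1F(y+c)=D_1F(y)$. Hence, writing $\triangle(y)=D_1F(y)$, the count is the number of $y$ with $\triangle(y+c)=\triangle(y)$.

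Next we use the structure of the solution sets of $\triangle(y)=\beta$, which we extract from the proof of Theorem \ref{niho theo} under the locally-APN hypothesis ($s_1=s_2=1$ or one of them equals $2$; by that theorem the correction terms vanish). First, $\triangle(y)=1$ has exactly $2^m$ solutions. These are $\{0,1\}\cup C_{0,0}$, which is exactly $\mathbb{F}_{2^m}$, since $C_{0,0}$ consists of those $y$ with $y,y+1$ both in $\mathbb{F}_{2^m}^*$. Every other value $\beta$ is attained $0$ or $2$ times, and the two solutions are $\{y,y+1\}$, because $y\mapsto y+1$ is the involution $x\mapsto -1-x$. We should double check that no other value has $2^m$ preimages. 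The differential spectrum in Theorem \ref{niho theo} shows $\omega_{2^m}=1$ and $\omega_2=(2^n-2^m)/2$, which settles this.

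Now count solutions of $\triangle(y+c)=\triangle(y)$ with $c\notin\{0,1\}$. If $y$ and $y+c$ both lie in $\mathbb{F}_{2^m}$, the equality holds; this requires $c\in\mathbb{F}_{2^m}$ and then gives all $2^m$ such $y$. If $y\notin\mathbb{F}_{2^m}$, then $\triangle(y)=\beta\neq 1$ has exactly the two preimages $y,y+1$. Equality therefore forces $y+c\in\{y,y+1\}$, i.e.\ $c\in\{0,1\}$, which is excluded. The mixed case ($y\in\mathbb{F}_{2^m}$, $y+c\notin\mathbb{F}_{2^m}$) fails because the two values are $1$ and $\neq 1$. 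Hence ${\rm FBCT}_F(a,b)=2^m$ if $a/b\in\mathbb{F}_{2^m}^{\sharp}$ and $0$ otherwise.

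The multiplicities follow by counting. There are $2^n-1$ choices of $b\neq 0$, times $2^m-2$ values of $c$ in the first case and $2^n-2^m$ values of $c\notin\mathbb{F}_{2^m}$ in the second; these add up with the first row to $2^{2n}$. The main obstacle is the careful justification that the preimage of $1$ under $\triangle$ is exactly $\mathbb{F}_{2^m}$. To get it, we will verify directly that $y\in\mathbb{F}_{2^m}$ implies $F(y)=y^{s(2^m-1)+1}=y$, so $\triangle(y)=1$. Since the count $2^m$ is known from Theorem \ref{niho theo}, equality of the two sets follows.
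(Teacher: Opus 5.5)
Your proof is correct and follows essentially the same route as the paper: reduce to $c=a/b$, use the differential spectrum from Theorem~\ref{niho theo} (every value $b_1\neq 1$ of $D_1F$ has at most the two preimages $\{y,y+1\}$) to rule out solutions $y\notin\mathbb{F}_{2^m}$, and use $F(y)=y$ on $\mathbb{F}_{2^m}$ for the remaining cases. Writing the condition as $D_1F(y+c)=D_1F(y)$ and checking explicitly that $(D_1F)^{-1}(1)=\mathbb{F}_{2^m}$ (a fact the paper uses only tacitly) makes your case analysis slightly cleaner, but it is the same argument.
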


\begin{proof}
We mainly study the number of solutions $x\in\mathbb{F}_{2^n}$ of
\begin{equation}\label{FBCT2f2}
  (x+a+b)^{d_3}+(x+a)^{d_3}
  +(x+b)^{d_3}+x^{d_3}=0
\end{equation}
for any $(a,b)\in \mathbb{F}_{2^n}^2$.

{\textbf{Case 1:}} $ab(a+b)= 0$. It can be easily seen that (\ref{FBCT2f2}) holds for all $x\in  \mathbb{F}_{2^n}$, which gives
$$
 {\rm FBCT}_{F}(a,b)=2^n.
$$

{\textbf{Case 2:}} $ab(a+b)\ne 0$. Let $c=\frac{a}{b}$ and $y=\frac{x}{b}$, we have $c\ne 0,1$. Then, (\ref{FBCT2f2}) is equivalent to
$$
b^{d_3}\big(y^{d_3}+(y+c)^{d_3}+(y+1)^{d_3}+(y+c+1)^{d_3}\big)=0.
$$
Since $b\ne 0$, we only need to consider the solutions of
\begin{equation}\label{FBCT2f2eq}
y^{d_3}+(y+c)^{d_3}+(y+1)^{d_3}+(y+c+1)^{d_3}=0.
\end{equation}
Since \(F(x)\) is locally-APN, according to Theorem \ref{niho theo}, for \(b_1 \neq 1\) the equation
\begin{equation}\label{41}
F(y+1) + F(y) = b_1
\end{equation}
has at most two solutions. Now suppose that \( y_1 \notin \mathbb{F}_{2^m} \) satisfies (\ref{41}) and also
\[
F(y_1)+F(y_1+1)+F(y_1+c)+F(y_1+c+1)=0,
\]
with \( c \neq 0,1 \). Then,
\[
F(y_1+c)+F(y_1+c+1)=b_1 .
\]
Thus \( y_1+c \) and \( y_1+c+1 \) are also two solutions of (\ref{41}).
Because (\ref{41}) has at most two solutions, we must have either
\( y_1+c = y_1 \) or \( y_1+c = y_1+1 \), i.e. \( c = 0 \) or \( c = 1 \), which contradicts the assumption \( c \neq 0,1 \). Hence, for any fixed \( c \neq 0,1 \), if (\ref{FBCT2f2eq}) has a solution, that solution must belong to \( \mathbb{F}_{2^m} \).   Next we consider solutions of (\ref{FBCT2f2eq}) in \(\mathbb{F}_{2^m}\). Since \(y^{d_3}=y\) holds for every \(y \in \mathbb{F}_{2^m}\), it follows that when \(c \in \mathbb{F}_{2^m}^{\sharp}\), every \(y \in \mathbb{F}_{2^m}\) is a solution.
If \(c \notin \mathbb{F}_{2^m}\), suppose some \(y \in \mathbb{F}_{2^m}\) satisfies (\ref{FBCT2f2eq}). Then \(y+c\) would also be a solution, but \(y+c \notin \mathbb{F}_{2^m}\), contradicting the fact that all solutions must belong to \(\mathbb{F}_{2^m}\). Therefore, for \(c \notin \mathbb{F}_{2^m}\), (\ref{FBCT2f2eq}) has no solution in \(\mathbb{F}_{2^m}\). This completes the proof.
\end{proof}

\begin{theorem}\rm\label{th3}
Let \(p\) be an odd prime. For any $(a,b)\in\mathbb{F}_{p^n}^2$, the value distribution of the multi-set $\left\{\nabla_F(a,b): a,b \in \mathbb{F}_{p^n} \right\}$ is shown in Table \ref{fbct221}.

\begin{table}[H]
\centering
 \caption{Second-order zero differential spectra}
\label{fbct221}
\begin{tabular}{|c|c|c|}
 \hline
  $\nabla_F(a,b)$  &  $(a,b)$  &{\rm  Multiplicity} \\\hline
    $p^n$ &  $ab=0$ &  $2\cdot p^n-1$ \\ \hline
   $p^m$ & $ab\neq0$, $\frac{a}{b}\in \mathbb{F}_{p^m}^{*}$
     &  $(p^m-1)(p^n-1)$ \\ \hline
     $1$  &  {\rm otherwise} &  $(p^n-p^m)(p^n-1)$ \\\hline
\end{tabular}
\end{table}
\end{theorem}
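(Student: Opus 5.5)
We count the solutions $x\in\mathbb{F}_{p^n}$ of
\[
(x+a+b)^{d_3}-(x+a)^{d_3}-(x+b)^{d_3}+x^{d_3}=0,
\]
in the spirit of Theorem \ref{th11}, with one change for odd $p$: the shift $x\mapsto -1-x$ replaces $x\mapsto x+1$.

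\textbf{Trivial case.} If $ab=0$, the left side vanishes identically, so $\nabla_F(a,b)=p^n$. There are $2p^n-1$ such pairs.

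\textbf{Reduction when $ab\neq 0$.} Put $c=a/b\neq 0$ and $y=x/b$, and divide by $b^{d_3}$. The equation becomes $D_1F(y+c)=D_1F(y)$, where $D_1F(y)=(y+1)^{d_3}-y^{d_3}$.

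\textbf{Case $D_1F(y)=1$.} By the proof of Theorem \ref{niho theo}, with $s_1,s_2\in\{1,2\}$ so that the extra terms vanish, the solutions of $D_1F(y)=1$ are exactly the $p^m$ elements of $\mathbb{F}_{p^m}$. These come from $C_{0,0}$ together with $0$ and $-1$. We then need $y+c\in\mathbb{F}_{p^m}$ as well. For $c\in\mathbb{F}_{p^m}^*$ this holds for every $y\in\mathbb{F}_{p^m}$, giving $p^m$ solutions. For $c\notin\mathbb{F}_{p^m}$ it never holds, giving none.

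\textbf{Case $D_1F(y)=b_1\neq 1$.} By local APN-ness together with $\omega_1=0$, the equation $D_1F(z)=b_1$ has exactly the two solutions $y$ and $-1-y$. These coincide only when $y=-1/2$. A solution with $D_1F(y+c)=b_1$ therefore forces $y+c=y$ or $y+c=-1-y$. The first is impossible since $c\neq 0$, so the second must hold, which gives the unique candidate $y=-(1+c)/2$.

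This candidate works exactly when $D_1F\big(\tfrac{c-1}{2}\big)=D_1F\big(\tfrac{-1-c}{2}\big)$. Since $\tfrac{c-1}{2}=-1-\tfrac{-1-c}{2}$, the symmetry $D_1F(z)=D_1F(-1-z)$ shows this always holds. So for every $c\neq 0$ the point $y_0=-(1+c)/2$ is a solution of the reduced equation.

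\textbf{Counting.} For $c\notin\mathbb{F}_{p^m}$ we have $y_0\notin\mathbb{F}_{p^m}$, so $D_1F(y_0)\neq 1$. The solution count is then exactly $1$, with none coming from the previous case.

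For $c\in\mathbb{F}_{p^m}^*$ we have $y_0\in\mathbb{F}_{p^m}$, so $y_0$ is already among the $p^m$ solutions counted in the previous case. Conversely, any solution with $D_1F(y)\neq 1$ would have to equal $y_0$, which lies in $\mathbb{F}_{p^m}$ and so has $D_1F(y_0)=1$, a contradiction. Hence no solutions lie outside $\mathbb{F}_{p^m}$, and the total is exactly $p^m$.

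The multiplicities follow at once. There are $(p^m-1)(p^n-1)$ pairs with $a/b\in\mathbb{F}_{p^m}^*$, and $(p^n-p^m)(p^n-1)$ remaining pairs.

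\textbf{Main obstacle.} The delicate step is justifying that, for $b_1\neq 1$, the fiber $\{z: D_1F(z)=b_1\}$ is exactly $\{y,-1-y\}$ and never $\{y\}$ alone or larger, while ensuring no double counting with the fiber over $1$. This relies on $\triangle_F(1,1)=p^m$ being attained precisely on $\mathbb{F}_{p^m}$, which I would read off from the case analysis in Theorem \ref{niho theo} once $s_1,s_2\le 2$.
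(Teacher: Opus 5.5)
Your proof is correct and is essentially the paper's argument. Both rest on two facts: the fiber of $D_1F$ over $1$ is exactly $\mathbb{F}_{p^m}$, and every other nonempty fiber is $\{y,-1-y\}$, which forces $y+c=-1-y$, i.e.\ $y=-(1+c)/2$. Your explicit split on whether $D_1F(y)=1$ only makes visible what the paper uses implicitly when it restricts to $y_1\notin\mathbb{F}_{p^m}$. For $c\notin\mathbb{F}_{p^m}$ you rule out solutions in $\mathbb{F}_{p^m}$ via $y+c\notin\mathbb{F}_{p^m}$, where the paper uses the involution $y\mapsto -1-c-y$ on the solution set.

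One wording fix: the bound $\triangle_F(1,b_1)\le 2$ for $b_1\in\mathbb{F}_p\setminus\{1\}$ is not supplied by the locally-APN definition, which only covers $b_1\notin\mathbb{F}_p$. You should therefore cite the full differential spectrum of Theorem~\ref{niho theo}, which is exactly what the paper relies on when it asserts that the equation has at most two solutions for $b_1\neq 1$.
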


\begin{proof}
We mainly study the number of solutions $x\in\mathbb{F}_{p^n}$ to
\begin{equation}\label{FBCTf3}
  (x+a+b)^{d_3}-(x+a)^{d_3}-(x+b)^{d_3}+x^{d_3}=0
\end{equation}
for any $(a,b)\in \mathbb{F}_{p^n}^2$.

{\textbf{Case 1:}} $ab=0$. It can be easily seen that (\ref{FBCTf3}) holds for all $x\in  \mathbb{F}_{p^n}$, which gives
$$
 {\rm FBCT}_{F}(a,b)=p^n.
$$

{\textbf{Case 2:}} $ab\ne 0$. Let $c=\frac{a}{b}$ and $y=\frac{x}{b}$, where $c\ne 0$. Then, (\ref{FBCTf3}) is equivalent to
$$
b^{d_3}\big(y^{d_3}-(y+c)^{d_3}-(y+1)^{d_3}+(y+c+1)^{d_3}\big)=0.
$$
Since $b\ne 0$, we only need to consider the solutions of
\begin{equation}\label{FBCTf3eq}
y^{d_3}-(y+c)^{d_3}-(y+1)^{d_3}+(y+c+1)^{d_3}=0.
\end{equation}
When \(b_1 \neq 1\), the equation
\begin{equation}\label{44}
F(y+1) - F(y) = b_1
\end{equation}
still has at most two solutions.
Moreover, if \(y\) satisfies (\ref{44}), then \(-y-1\) also satisfies  (\ref{44}).
Suppose that \(y_1 \notin \mathbb{F}_{p^m}\) satisfies  (\ref{44}) and also
\[
F(y_1)-F(y_1+1)-F(y_1+c)+F(y_1+c+1)=0,
\]
with \(c \neq 0\). Then
\[F(y_1+c+1)-F(y_1+c)=b_1,\]
so \(y_1+c\) is another solution of  (\ref{44}).
Since  (\ref{44}) has at most two solutions and \(\{y_1,\,-y_1-1\}\) already gives two distinct solutions, we must have \(y_1+c \in \{y_1,\,-y_1-1\}\).
The case \(y_1+c = y_1\) gives \(c=0\), which is excluded. Therefore
\[
y_1 = \frac{-c-1}{2}.
\]
A direct substitution shows that \(y = \frac{-c-1}{2}\) indeed satisfies  (\ref{FBCTf3eq}).
Hence, for any fixed \(c \neq 0\), if  (\ref{FBCTf3eq}) has a solution, that solution must belong to
\[\mathbb{F}_{p^m} \cup \left\{\frac{-c-1}{2}\right\}.\]
Next we examine the solutions of (\ref{FBCTf3eq}) in \(\mathbb{F}_{p^m} \cup \{\frac{-c-1}{2}\}\). Because \(y^{d_3}=y\) holds for all \(y \in \mathbb{F}_{p^m}\), we see that if \(c \in \mathbb{F}_{p^m}^{*}\), every \(y \in \mathbb{F}_{p^m}\) is a solution. If \(c \notin \mathbb{F}_{p^m}\) and \(y \in \mathbb{F}_{p^m} \cup \{\frac{-c-1}{2}\}\) satisfies (\ref{FBCTf3eq}), then \(-y-c-1\) must also belong to \(\mathbb{F}_{p^m} \cup \{\frac{-c-1}{2}\}\). This forces \(y = \frac{-c-1}{2}\). Therefore, when \(c \notin \mathbb{F}_{p^m}\), (\ref{FBCT2f2eq}) has exactly one solution. This completes the proof.
\end{proof}

\begin{remark}\rm
Based on Theorems \ref{th11} and \ref{th3}, the FBCT and second-order zero differential spectra of \(F_1\) and \(F_2\) in Remark \ref{F1F2} can be directly obtained. Their value distributions are presented in Tables \ref{fbct22} and \ref{fbct221}, respectively. To the best of our knowledge, only six classes of power functions are known to have explicit values for all entries in their FBCT (see Table \ref{tablefb122}).

\begin{table}[H]
\caption{$F(x)=x^d$ over $\mathbb{F}_{2^n}$ with fully determined FBCT}
\label{tablefb122}
\centering
\begin{tabular}{|c|c|c|c|}
\hline
    $d$       & Conditions  & $\beta_c(F)$ & Ref.   \\
    [0.5ex]
    \hline
     $2^n-2$ & $n$ even & 4 &   \cite{edd.cc} \\  \hline
    $2^k+1$  &  $(n,k)=d$, $d\neq1$ & $2^n$ & \cite{edd.cc} \\   \hline
    $2^{2k}+2^k+1$  &  $n=4k$  & $2^{2k}$ & \cite{edd.cc} \\ \hline
     $2^{m+1}-1$  &  $n=2m$  & $2^{m}$ & \cite{depth} \\ \hline
       $2^{n-2}-1$  &  $3\nmid n$ or $3| n$ & 4 or 8 & \cite{me.24} \\   \hline
            $s(2^{m}-1)+1$  &  \makecell{$s=\frac{2^r}{2^r\pm1}$, $(r, m) = 1$, \\$n=2m$} & $2^m$ & This paper \\ \hline
       \end{tabular}
\end{table}
\end{remark}

\section{Conclusion}\label{conclusion}

In this paper, we mainly investigated Niho type power functions over  $\mathbb{F}_{p^{2m}}$. We determined the differential uniformity of all Niho type power functions. We showed that a  Niho type power function is locally-APN if and only if its Walsh spectrum is precisely four-valued, exactly taking the values in $\{ -p^m, 0, p^m, 2p^m \}$. Equivalently, the associated cyclic codes have four nonzero weights: $p^{m-1}(p-1)(p^m + k)$ for $k = 0, 1, -1, -2$. Furthermore, for Niho type locally-APN power functions, we also determined their differential spectrum, Walsh spectrum, associated cyclic codes, Feistel Boomerang Connectivity Table, and second-order zero differential spectra. The main contribution are summarized as follows.
\begin{itemize}
  \item For Niho type power functions that are locally-APN functions, we gave two necessary and sufficient conditions (see Theorem \ref{nihowalsh}).
If all four-valued cross-correlation Niho type power functions have been found, all locally-APN Niho type power functions can be obtained by according to our conclusion.
  \item For a class of special Niho type power functions, we determined its differential uniformity. Furthermore, we completely characterized its differential spectrum when these functions are locally-APN in Theorem \ref{niho theo}.
  \item We studied some properties of Niho type locally-APN power functions and determined the FBCT and the second-order zero differential spectra of a class of Niho type locally-APN power functions in Theorems \ref{th11} and \ref{th3}.
\end{itemize}

\end{document}